%% file: conference_101719.tex
\documentclass[conference,onecolumn,draftclsnofoot,12pt]{IEEEtran}

\IEEEoverridecommandlockouts
\usepackage{cite}
\usepackage{amsmath,amssymb,amsfonts}
\usepackage{algorithm}
\usepackage{algpseudocode}
\usepackage{graphicx}
\usepackage{textcomp}
\usepackage[table]{xcolor}
\usepackage{acronym}
\usepackage{bm}
\usepackage{hyperref}
\usepackage{tikz}
\usetikzlibrary{arrows.meta, positioning}
\usepackage{multirow}
\usepackage{caption}
\usepackage{subcaption}
\def\BibTeX{{\rm B\kern-.05em{\sc i\kern-.025em b}\kern-.08em
    T\kern-.1667em\lower.7ex\hbox{E}\kern-.125emX}}

\input{abbreviations}
\input{symbols}

\usepackage{amsthm}

\newtheoremstyle{mystyle}%
  {}%
  {}%
  {\itshape}%
  {}%
  {\bfseries}%
  {.}%
  { }%
  {\thmname{#1}\thmnumber{ #2}\thmnote{ (#3)}}%

\theoremstyle{mystyle}
\newtheorem{proposition}{Proposition}

\newtheorem{corollary}{Corollary}[proposition]

\newtheorem{remark}{Remark}

\begin{document}

\newcommand{\scaleSection}{\vspace{-0.205cm}}
\newcommand{\scaleSubsection}{\vspace{-0.155cm}}
\newcommand{\scaleSubsubsection}{\vspace{-0cm}}
\newcommand{\scaleSectionBelow}{\vspace{-0.175cm}}
\newcommand{\scaleSubsectionBelow}{\vspace{-0.175cm}}
\newcommand{\scaleSubsubsectionBelow}{\vspace{-0.025cm}}
\newcommand{\scaleAlign}{\vspace{-0.185cm}}

\title{Molecule Mixture Detection and Design for MC Systems with Non-linear, Cross-reactive Receiver Arrays\vspace*{-15mm}}
\author{\IEEEauthorblockN{Bastian Heinlein, Kaikai Zhu, Sümeyye Carkit-Yilmaz, Sebastian Lotter,\\ Helene M. Loos, Andrea Buettner, Yansha Deng, Robert Schober, and Vahid Jamali}
\thanks{
Bastian Heinlein, Kaikai Zhu, Sümeyye Carkit-Yilmaz, Sebastian Lotter, Helene M. Loos, Andrea Buettner, and Robert Schober are with Friedrich-Alexander-Universität Erlangen-Nürnberg, Erlangen, Germany. Bastian Heinlein and Vahid Jamali are with Technical University of Darmstadt, Darmstadt, Germany. Yansha Deng is with King's College, London, United Kingdom. Helene M. Loos and Andrea Büttner are also with the Fraunhofer Institute for Process Engineering and Packaging IVV, Freising, Germany. Helene M. Loos is also with Rheinische Friedrich-Wilhelms-Universität Bonn, Bonn, Germany.}
\thanks{This work was funded by the Deutsche Forschungsgemeinschaft (DFG, German Research Foundation) – GRK 2950 – Project-ID 509922606.}
\thanks{This work has been presented in part at the ACM Nanoscale Computing and Communications Conference, 2025,~\cite{heinlein:nanocom}.}
}

\maketitle
\vspace*{-20mm}
\begin{abstract}
\vspace*{-5mm}
Air-based \ac{MC} has the potential to be one of the first \ac{MC} systems to be deployed in real-world applications, enabled by commercially available sensors. However, these sensors usually exhibit non-linear and cross-reactive behavior, contrary to the idealizing assumption of linear and perfectly molecule type-specific sensing often made in the \ac{MC} literature.  
To address this mismatch, we propose several detectors and transmission schemes for a molecule mixture communication system where the \ac{RX} employs non-linear, cross-reactive sensors.
All proposed schemes are based on the first- and second-order moments of the symbol likelihoods that are fed through the non-linear \ac{RX} using the Unscented Transform.
In particular, we propose an approximate maximum likelihood (AML) symbol-by-symbol detector for \ac{ISI}-free transmission scenarios and a complementary mixture alphabet design algorithm which accounts for the \ac{RX} characteristics.
When significant \ac{ISI} is present at high data rates, the AML detector can be adapted to exploit statistical \ac{ISI} knowledge. Additionally, we propose a sequence detector which combines information from multiple symbol intervals.
For settings where sequence detection is not possible due to extremely limited computational power at the \ac{RX}, we propose an adaptive transmission scheme which can be combined with symbol-by-symbol detection.
Using computer simulations, we validate all proposed detectors and algorithms based on the responses of commercially available sensors as well as artificially generated sensor data incorporating the characteristics of \acl{MOS} sensors.
By employing a general system model that accounts for transmitter noise, \ac{ISI}, and general non-linear, cross-reactive \ac{RX} arrays, this work enables reliable communication for a large class of \ac{MC} systems.
\end{abstract}

\acresetall
\renewcommand{\baselinestretch}{1.2}
\input{sections/introduction}
\input{sections/system_model}

\input{sections/detector}
\input{sections/mixture}
\input{sections/evaluation}

\input{sections/conclusion}
\input{sections/appendix}

\vspace*{-3mm}
\bibliographystyle{ieeetr}
\bibliography{literature}
\clearpage
\end{document}

%% file: abbreviations.tex
\acrodef{MC}{molecular communication}
\acrodef{ML}{maximum likelihood}
\acrodef{AML}{approximate maximum likelihood}
\acrodef{EM}{electro-magnetic}
\acrodef{MOS}{metal-oxide semi-conductor}
\acrodef{CSK}{concentration shift keying}
\acrodef{RSK}{ratio shift keying}
\acrodef{MSK}{molecule shift keying}
\acrodef{GMoSK}{generalized molecule shift keying}
\acrodef{BCSK}{binary concentration shift keying}
\acrodef{MMSK}{molecule mixture shift keying}
\acrodef{MIMO}{multiple-input multiple-output}
\acrodef{UT}{Unscented Transform}
\acrodef{RX}{receiver}
\acrodef{TX}{transmitter}
\acrodef{RV}{random variable}
\acrodef{pdf}{probability density function}
\acrodef{ISI}{inter-symbol interference}
\acrodef{SIN}{signal-independent noise}
\acrodef{SDCN}{signal-dependent channel noise}
\acrodef{BER}{bit error rate}
\acrodef{SER}{symbol error rate}
\acrodef{SNR}{signal-to-noise ratio}
\acrodef{kNN}{$k$-nearest-neighbor}
\acrodef{PEP}{pairwise error probability}
\acrodef{MDA}{mixture design algorithm}
\acrodef{SOD}{sensor output domain}
\acrodef{SID}{sensor input domain}

\acrodef{pdf}{probability density function}
\acrodef{CSK}{concentration shift keying}
\acrodef{RX}{receiver}
\acrodef{TX}{transmitter}
\acrodef{MOS}{metal-oxide semiconductor}
\acrodef{MMSK}{molecule mixture shift keying}
\acrodef{MC}{molecular communication}
\acrodef{UT}{unscented transform}
\acrodef{SER}{symbol error rate}

\acrodef{LC}{low complexity}
\acrodef{GMM}{Gaussian Mixture Model}

\acrodef{ODD}{odor delivery device}
\acrodef{VOC}{volatile organic compounds}

%% file: symbols.tex
\newcommand{\kappamax}{\kappa^{\max}}

\renewcommand{\a}{\mathbf{a}}
\renewcommand{\b}{\mathbf{b}}

\newcommand{\x}{\mathbf{x}}

\newcommand{\xbar}{\mathbf{\bar{x}}}

\renewcommand{\H}{\mathbf{H}}

\newcommand{\y}{\mathbf{y}}
\newcommand{\z}{\mathbf{z}}
\newcommand{\nbase}{\bar{\mathbf{n}}}
\newcommand{\ntx}{\mathbf{n}^{\mathrm{TX}}}
\newcommand{\nc}{\mathbf{n}^{\mathrm{C}}}
\newcommand{\nrx}{\mathbf{n}^{\mathrm{RX}}}
\newcommand{\f}[1]{\mathbf{f}\left( #1 \right)}

\newcommand{\g}[1]{\mathbf{g}\left( #1 \right)}
\newcommand{\nuc}{\nu^{\mathrm{C}}}
\newcommand{\n}{\mathbf{n}}

\newcommand{\E}[1]{\mathrm{E}\left\{ #1 \right\}}
\newcommand{\Cov}[1]{\mathrm{Cov}\left\{ #1 \right\}}
\newcommand{\EUT}[1]{\mathrm{E}_{\mathrm{UT}}\left\{ #1 \right\}}
\newcommand{\CovUT}[1]{\mathrm{Cov}_{\mathrm{UT}}\left\{ #1 \right\}}

\newcommand{\symbolalphabet}{\mathcal{A}}
\newcommand{\feasibleset}{\mathcal{F}_{\x}}
\newcommand{\feasiblesety}{\mathcal{F}_{\y}}

\newcommand{\nsymbols}{N}
\newcommand{\nspecies}{S}
\newcommand{\nsensors}{R}
\newcommand{\nsigmapoints}{n_{\sigma}}

\newcommand{\I}{\mathbf{I}}
\newcommand{\nullmatrix}{\mathbf{0}}

\newcommand{\SIN}{^{\mathrm{SIN}}}
\newcommand{\SDCN}{^{\mathrm{SDCN}}}

\newcommand{\deuclidean}{d^{\ell_2}}
\newcommand{\dvahid}{d^{\mathrm{SNR}}}

\newcommand{\s}{\mathbf{s}}

\newcommand{\transpose}{^\mathrm{T}}
\newcommand{\hadamard}{\odot}

\newcommand{\muvec}{\boldsymbol{\mu}}
\newcommand{\muvechat}{\boldsymbol{\hat{\mu}}}
\newcommand{\covmat}{\mathbf{C}}
\newcommand{\covmathat}{\mathbf{\hat{C}}}

\newcommand{\shatlc}{\hat{s}^{\mathrm{LC}}}

\newcommand{\W}{\mathbf{W}}

\renewcommand{\c}{\mathbf{c}}
\newcommand{\candidateset}{\mathcal{C}}

\newcommand{\ybar}{\bar{\y}}

%% file: sections/introduction.tex
\scaleSection\section{Introduction}\label{sec:introduction}\scaleSectionBelow
\Ac{MC} constitutes a new paradigm in communication engineering by employing molecules for information transmission in scenarios where \ac{EM} wave-based communication is not practical such as for interfacing with biological entities via chemical signals or in \ac{EM} wave-denied environments \cite{guo:molecular_physical_layer_6g}. 
While most work in \ac{MC} has considered scenarios where information is encoded into the concentration of a single molecule type, i.e., \ac{CSK}, modulation schemes employing multiple molecules offer additional degrees of freedom. 
The arguably simplest example of such a multi-molecule scheme is \ac{MSK} where the type of molecule that is released in a symbol interval indicates the transmitted symbol, promising higher data rates compared to \ac{CSK}~\cite{eckford:achievable_information_rates_mc_distinct_molecules}. Based on~\cite{eckford:achievable_information_rates_mc_distinct_molecules}, a variety of more complex modulation schemes employing multiple molecule types have been proposed to further aid \ac{ISI} mitigation, deal with time-varying channels, and increase data rate, e.g., ~\cite{chen:gmosk,araz:rskm_time_varying_MC_channels,kilic:mrsk,gang:molecular_type_permuation_shift_keying_mc}.

However, most existing literature on multi-molecule modulation schemes assumes direct access to the concentration $c_s$ of each molecule type $s \in \{1,\dots,\nspecies\}$ through a molecule counting receiver whose response $r_s$ to the $s$-th molecule type is linearly proportional to $c_s$, i.e., $r_s \sim c_s$. Here, $\nspecies$ denotes the total number of considered molecule types. 
While the concentrations of individual molecule types can in principle be measured in macro-scale applications, e.g., using proton-transfer-reaction mass spectrometry or membrane inlet mass spectrometry, these devices are prohibitively expensive for most applications and are not suitable for mass deployment~\cite{mcguiness:experimental_results_openair_transmission_macromolecular_communication}. On the other hand, cheaper sensors exist, such as optical sensors for liquid-based \ac{MC} or \ac{MOS} sensors for air-based \ac{MC}, but these sensors are usually cross-reactive or non-linear or both.
For example, molecules have often overlapping absorption spectra, rendering optical sensors \textbf{cross-reactive}~\cite{wietfeld:evaluation_multi_molecule_molecular_communication_testbed,walter:cnnbased_detection_mixed_molecule_concentrations}, i.e., the sensor response $r$ at a particular wavelength is given by $r = w_1 \cdot c_1 + \dots + w_{\nspecies} \cdot c_{\nspecies}$, where $w_s$ represents the weighting factor for the $s$-th molecule type. 
\ac{MOS} sensors, on the other hand, are well-known to be \textbf{non-linear}. Specifically, they exhibit power-law behavior, i.e., $r = a\cdot c^b$, where $a,b \in \mathbb{R}$~\cite{yamazoe:theory_power_laws_semiconducator_gas_sensors}. In practice, \ac{MOS} sensors also react to multiple molecule types at once~\cite{albert:crossreactive_chemical_sensor_arrays}, i.e., they are both non-linear and cross-reactive at the same time. If a sensor is both cross-reactive and non-linear, its response is described by a non-linear function $f(c_1, \dots, c_{\nspecies})$. 

Because experimental work in \ac{MC} usually relies on such non-linear or cross-reactive sensors, the experimentally implemented modulation schemes have been comparatively simple so far: While the use of optical sensors for liquid-based multi-molecule communication  has been investigated in~\cite{walter:cnnbased_detection_mixed_molecule_concentrations,wietfeld:evaluation_multi_molecule_molecular_communication_testbed,cali:experimental_implementation_molecule_shift_keying}, these testbeds are limited to a small number of molecule types and the molecule alphabets were chosen ad hoc, neglecting opportunity for improved system performance by optimized molecule alphabets. \ac{MOS} sensors, on the other hand, have been employed mostly for \ac{CSK} transmissions, also with ad hoc chosen alphabets, e.g., \cite{kim:experimentally_validated_channel_model_for_mc_systems,farsad:tabletop_mc_text_messages_through_chemical_signals}. The work in~\cite{wisayataksin:4ary_odor_shift_keying} is a notable exception among the air-borne testbeds employing \ac{MOS} sensors: Here, information is transmitted by releasing either no molecules, ethanol, acetone, or both and recovered by two \ac{MOS} sensors at the \ac{RX}. However, this work does not investigate higher-order modulation schemes and relies only on an ad hoc chosen alphabet. 

We attribute the absence of low-cost practical \ac{MC} systems employing higher-order multi-molecule modulation schemes and non-linear, cross-reactive \acp{RX} in part to a lack of a corresponding theoretical foundation. The primary work to theoretically investigate a macro-scale \ac{RX} that is both cross-reactive and non-linear is \cite{jamali:olfaction_inspired_MC}. There, a simple receptor array leveraging compressive sensing for symbol detection, inspired by the olfactory system of insects, is investigated in an \ac{ISI}-free setting. The information is encoded into the concentration of multiple molecule types at once (i.e., into a \textbf{molecule mixture}), with the different mixtures being optimized using a greedy algorithm to elicit different responses of the receptor array.  
However, to fully leverage the potential of multi-molecule communication for higher data rates, \ac{ISI} has to be considered. Yet, so far, approaches for \ac{ISI}-mitigation at the \ac{TX}, e.g., via channel coding or by adaptively choosing the amount of molecules transmitted, have been mostly limited to \ac{CSK}~\cite{tepekule:isi_mitigation_techniques_mc,jing:lightweight_channel_code_isi_mitigation_mc,arjmandi:isi_avoiding_modulation_diffusion_based_mc}. On the other hand, schemes for \ac{ISI} mitigation at the \ac{RX} typically assume direct access to molecule concentrations and are thus not applicable in settings with non-linear, cross-reactive sensors~\cite{tepekule:isi_mitigation_techniques_mc,jamali:design_mf_molecule_counting_rxs,kuran:survey_modulation_techniques_mcd}. 

\scaleSubsection\subsection{Main Contributions}\scaleSubsectionBelow\label{sec:introduction:contributions}
Motivated by the lack of study on molecule mixture communication with non-linear, cross reactive \ac{RX} arrays, especially for \ac{ISI} channels, we propose methods to enable reliable communication in such scenarios. In particular, our key contributions are summarized as follows.
\begin{itemize}
    \item We introduce a \textbf{comprehensive system model} for molecule mixture communication that accounts for various experimentally identified noise sources (i.e., \ac{TX}, channel, and \ac{RX} noise) impairing communication performance and an \ac{RX} employing a non-linear, cross-reactive sensor array. Unlike the conference version of this paper~\cite{heinlein:nanocom}, the model accounts for \ac{ISI}, thus making it applicable to settings with shorter symbol intervals, as needed to achieve higher data rates.
    \item To perform detection for this general system model, we employ the approximate likelihoods of the transmitted mixtures based on the \textbf{first- and second-order moments} of the \ac{RX}'s sensor response. We derive the relevant moments by propagating the uncertainty introduced by the noise sources through the system model leveraging the \ac{UT}. 
    \item In the \textbf{\ac{ISI}-free case}, we leverage these moments to derive an \ac{AML} symbol-by-symbol detector and a \textit{static}\footnote{We refer to this algorithm as \textit{static} because it selects an alphabet of molecule mixtures once before the transmission starts and then uses this alphabet for all symbol intervals.} \ac{MDA}, which optimizes mixtures so that the sensor outputs at the \ac{RX} are well-separable.
    \item Extending \cite{heinlein:nanocom}, we also leverage the derived moments to derive several methods to account for \ac{ISI}: We adapt the \ac{AML} detector to \textbf{take into account statistical information about the \ac{ISI}}. This symbol-by-symbol detector in conjunction with the previously mentioned \ac{MDA} is especially suitable for settings where both \ac{RX} and \ac{TX} have extremely limited computational capabilities. We also propose a sequence detector to enable lower \acp{SER} when more computational power is available at the \ac{RX}. Finally, we propose an adaptive transmission scheme for settings with a computationally more powerful \ac{TX}, enabling similar \acp{SER} as achieved by the sequence detector while requiring only a symbol-by-symbol detector at the \ac{RX}. 
\end{itemize}

\scaleSubsection\subsection{Structure and Notation}\scaleSubsectionBelow\label{sec:introduction:structure_notation}
The remainder of this paper is structured as follows: Section~\ref{sec:system_model} introduces the proposed system model. Section~\ref{sec:detection} then presents the various moment-based detectors. This is followed by the derivation of the \ac{MDA} and the adaptive transmission scheme in Section~\ref{sec:mixture}. Finally, we evaluate the performance of the proposed schemes in Section~\ref{sec:evaluation} before concluding the paper in Section~\ref{sec:conclusion}. 

\textbf{Notation}: Vectors and matrices are denoted by lowercase and uppercase bold letters, respectively. The $i$-th entry of vector $\x$ is denoted by $[\x]_i$ and the $j$-th column of matrix $\covmat$ as $[\covmat]_j$. The transpose of matrix $\covmat$ is denoted by $\covmat\transpose$ whereas its determinant is denoted by $\det\{\covmat\}$. 
Element-wise multiplication of two vectors $\x$ and $\y$ is shown by $\x\hadamard\y$. The \ac{pdf} of a \ac{RV} $x$ conditioned on $y$ is given by $p_{x|y}(x|y)$. The expectation of a \ac{RV} is denoted by $\E{\cdot}$ and the covariance matrix of a random vector by $\Cov{\cdot}$. $\I$ denotes the identity matrix, $\mathbf{1}$ denotes an all-one column vector, and $\nullmatrix$, depending on the context, either a column vector or a matrix containing only zeroes. $\lVert \cdot \rVert_2$ denotes the $\ell_2$-norm. 
Sets are shown by calligraphic letters and the cardinality of set $\mathcal{A}$ is denoted by $|\mathcal{A}|$. $\mathcal{U}(\mathcal{A})$ denotes a \ac{RV} that is uniformly distributed over a \mbox{set $\mathcal{A}$}. $\mathbb{N}$ and $\mathbb{R}$ denote respectively the sets of real and natural numbers. Analogously, and $\mathbb{R}_{\geq0}$ and $\mathbb{N}_{\geq0}$ denote the sets of non-negative real and the non-negative natural numbers, respectively.

%% file: sections/system_model.tex
\scaleSection\section{System Model}\scaleSectionBelow\label{sec:system_model}
In the following, we first introduce the system model in its general form in Section~\ref{sec:system_model:overview} before characterizing the response of \ac{MOS} sensors to molecule mixtures and relevant noise distributions in Sections~\ref{sec:system_model:rx_characteristics} and~\ref{sec:system_model:noise_distributions}, respectively. 

\scaleSubsection\subsection{Overview}\scaleSubsectionBelow\label{sec:system_model:overview}
\begin{figure*}
    \vspace*{-10mm}
    \centering\includegraphics[width=\textwidth]{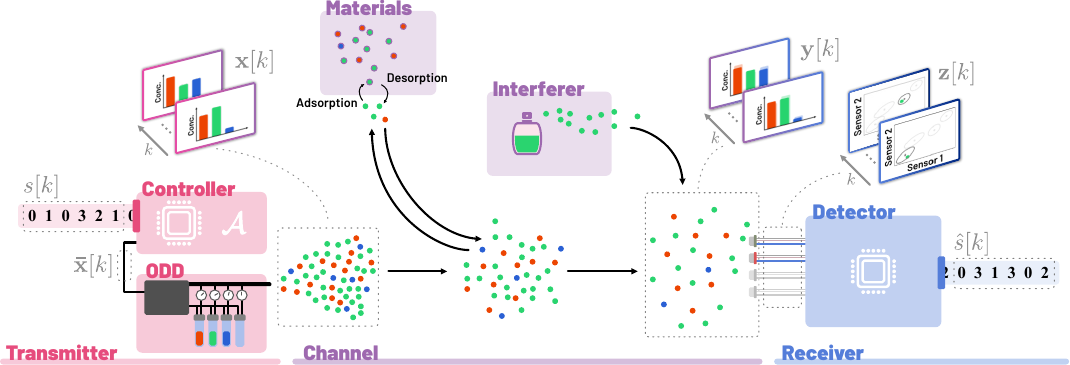}
    \vspace*{-9mm}
    \caption{\textbf{System overview.} At the \ac{TX}, a controller converts a sequence of symbols $s[k]$ to control signals for the \acl{ODD} (ODD), which should release molecule mixtures $\xbar[k]$, but releases mixtures $\x[k]$ due to hardware imperfections instead. At the \ac{RX}, molecule mixtures $\y[k]$ are observed because $\x[k]$ is subject to attenuation, e.g., due to dispersion or adsorptions to materials, and interferers might introduce additional molecules. The output of the \ac{RX} sensor array, $\z[k]$, is then used by the detector to derive a symbol estimate $\hat{s}[k]$.}  \label{fig:system_model:overview}
    \vspace*{-10mm}
\end{figure*}
In this work, we consider the general system model illustrated in Figure~\ref{fig:system_model:overview} that accounts for the three major noise sources reported in the experimental \ac{MC} literature, namely release noise~\cite{farsad:tabletop_mc_text_messages_through_chemical_signals,kim:experimentally_validated_channel_model_for_mc_systems}, channel noise~\cite{kim:experimentally_validated_channel_model_for_mc_systems}, and the noise of the sensor array~\cite{shin:low_frequency_noise_gas_sensors,farsad:tabletop_mc_text_messages_through_chemical_signals}. It also accounts for \ac{ISI} in the propagation channel and the non-linear, cross-reactive behavior of the sensor array at the \ac{RX}. To limit complexity, we adopt a single-sample detector, i.e., one sample is taken per symbol interval. 
Thus, we arrive at the following system model:
\allowdisplaybreaks
\begin{subequations}
    \begin{align}
        \x[k] &= \xbar[k] +\ntx(\xbar[k]) \label{eq:system_model:tx_noise} \\
        \ybar[k] &= \sum_{\kappa=0}^{\kappamax} \H[\kappa] \x[k-\kappa]\label{eq:system_model:isi} \\
        \y[k] &= \ybar[k] + \nc\left(\ybar[k]\right) \label{eq:system_model:channel_noise}\\
        \z[k] &= \f{\y[k]} + \nrx(\f{\y[k]}).\label{eq:system_model:rx_noise}
    \end{align}
\end{subequations}
Eq.~\eqref{eq:system_model:tx_noise} describes the noisy release process of the signaling molecules, while~\eqref{eq:system_model:isi} models how the current and previous symbol transmissions accumulate in the propagation channel, and \eqref{eq:system_model:channel_noise} accounts for the impact of propagation uncertainty and interference. Finally, \eqref{eq:system_model:rx_noise} describes how the measurements, which are used for symbol detection, relate to the molecule concentrations observed at the \ac{RX}. The variables appearing in \eqref{eq:system_model:tx_noise}-\eqref{eq:system_model:rx_noise} will be explained in the following.

We consider a molecule mixture communication system with mixture alphabet $\symbolalphabet$ of size $\nsymbols$. The controller at the \ac{TX} receives the index $s[k] \in \mathcal{S}_0 = \{1, \dots, \nsymbols\}$ of the mixture that is to be transmitted in the $k$-th interval and then determines how many molecules of each type should be released, summarized in $\xbar[k] \in \mathbb{R}_{\geq 0}^{\nspecies}$ , where $\nspecies$ denotes the number of molecule types available at the \ac{TX}. The controller then informs the release hardware, such as an \ac{ODD}, about the desired release amount $\xbar[k]$~\cite{hopper:multichannel_portable_odor_delivery_device}. 
Eq.~\eqref{eq:system_model:tx_noise} represents that real hardware cannot perfectly control the number of molecules it releases~\cite{farsad:tabletop_mc_text_messages_through_chemical_signals}, resulting in a \textbf{released molecule concentration} $\x[k]$ instead of $\xbar[k]$. We model this release noise by additive noise $\ntx(\xbar[k])$ which is memory-free but can depend on the desired mixture $\xbar[k]$ and follows the conditional \ac{pdf} $p_{\ntx|\xbar[k]}(\ntx)$. It is noted that the number of released molecules cannot be arbitrarily large, but is limited, e.g., by hardware constraints. Therefore, we require that all symbols $\xbar[k] \in \symbolalphabet$ lie in a \textit{feasible set} $\feasibleset \subseteq \mathbb{R}_{\geq 0}^{\nspecies}$. In this work, we consider $\feasibleset$ to be a hypercube, i.e.,  there is an individual constraint on each molecule type. However, our proposed schemes are also applicable to other shapes of the feasible set, e.g., when limiting the total number of released molecules.

Eq.~\eqref{eq:system_model:isi} models the expected molecule concentrations at the \ac{RX}, $\ybar[k] \in \mathbb{R}_{\geq 0}^{\nspecies}$ after \textbf{propagation through the \ac{ISI} channel}. The fraction of the released molecules expected at the \ac{RX} is characterized by diagonal channel matrix $\H[\kappa]$. Here, $[\H[\kappa]]_{ii}$ denotes the attenuation factor for molecules of type $i$ released $\kappa\!\! \in \!\!\{0, \dots, \kappamax\}\!$ symbol intervals ago. The specific entries in $\H[\kappa]$ depend, among other things, on the dispersion in the environment, degradation (e.g., due to photochemical oxidation), and adsorption to and desorption from materials in the environment~\cite{carkit:experimental_analysis_surface_induced_molecular_adsorption,farsad:tabletop_mc_text_messages_through_chemical_signals,williams:human_odour_thresholds_tuned_atmospheric_chemical_lifetimes}. Here, we assume that the molecules propagate independently from each other.

Eq.~\eqref{eq:system_model:channel_noise} captures how $\ybar[k]$ is affected by \textbf{interference or propagation uncertainty} (e.g., each sensor at the \ac{RX} might observe slightly different molecule concentrations due to the spatial separation of the sensors and diffusion noise), resulting in molecule concentrations $\y[k]$ at the \ac{RX}. We model the impact of these uncertainties by channel noise $\nc(\ybar[k])$ which can be signal-dependent and is distributed according to a conditional \ac{pdf} $p_{\nc|\y[k]}(\nc)$. Throughout this paper, we will use the term \ac{SID} when referring to the moments or observations of the concentrations at the \ac{RX}.

Finally,~\eqref{eq:system_model:rx_noise} describes how the molecule concentrations $\y[k]$ are converted to the \textbf{response of the sensing units}, $\z[k] \in \mathbb{R}^{\nsensors}$, where $[\z[k]]_r$, $r \in \{1,\dots,R\}$, denotes the output of the $r$-th sensing unit and $R$ is the total number of sensing units at the \ac{RX}. The detector uses $\z[k]$ to estimate the transmitted symbol as $\hat{s}[k]$. The mapping from $\y[k]$ to $\z[k]$ follows a non-linear function $\f\cdot$, whose output depends on multiple molecule concentrations simultaneously. We model the sensing unit as memory-free, which is realistic if the propagation dynamics are slow compared to the sensor dynamics. This is true for various types of practical sensors, including but not limited to optical sensors and many \ac{MOS} sensors~\cite{dennler:high_speed_odor_sensing,wu:research_progress_mems_gas_sensors,wietfeld:evaluation_multi_molecule_molecular_communication_testbed}. 
In practice, the sensing units also introduce measurement noise $\nrx(\f{\y[k]})$, e.g., thermal noise or quantization noise. For the sake of generality, we again permit signal-dependent measurement noise and characterize it by a conditional \ac{pdf} $p_{\nrx|\f{\y[k]}}(\nrx)$. Throughout this paper, we will use the term \ac{SOD} when referring to moments or observations related to~$\z[k]$.

In summary, we consider a very general system model accounting for the relevant noise sources reported in the experimental \ac{MC} literature, channel \ac{ISI}, and a possibly non-linear sensor function. 

\scaleSubsection\subsection{MOS Sensors as a Non-linear, Cross-reactive RX Array}\scaleSubsectionBelow\label{sec:system_model:rx_characteristics}
In this work, we consider $\nsensors$ \ac{MOS} sensors as an example of an array of non-linear, cross-reactive sensing units. As already discussed in Section~\ref{sec:introduction}, it is well established that these sensors exhibit power-law behavior, i.e., their conductance $G$ for a given concentration $c$ of a target gas can be described by $G = a \cdot c^b$, where $a,b \in \mathbb{R}$ are constants specific to the considered combination of sensor and target gas~\cite{yamazoe:theory_power_laws_semiconducator_gas_sensors}. However, there is no clear consensus on the function that describes the response to gas mixtures. Instead, a variety of models have been proposed in the literature, e.g.,~\cite{llobet:steadystate_transient_behavior_of_thickfilm_tin_oxide_sensors_gas_mixtures,madrolle:linear_quadratic_model_quantification_mixture_two_diluted_gases_single_MOS,hirobayashi:verification_logarithmic_model,gurin:gas_mixture_estimation_using_powerlaw_models_arrayed_chemiresistive_MOS}. Here, we focus on the model originally introduced in \cite{llobet:steadystate_transient_behavior_of_thickfilm_tin_oxide_sensors_gas_mixtures} since it is applicable to mixtures with more than two components and it was shown to provide a good fit to measurement data in \cite{madrolle:linear_quadratic_model_quantification_mixture_two_diluted_gases_single_MOS}.
For the $r$-th sensor, we have~\cite{llobet:steadystate_transient_behavior_of_thickfilm_tin_oxide_sensors_gas_mixtures}
\begin{equation}\label{eq:system_model:reciever_characteristics:f}
    [\f{\y}]_r = \mathbf{a}_r\transpose \y_{\mathrm{p}} - \y_{\mathrm{p}} \transpose \mathbf{A}_r\y_{\mathrm{p}}.
\end{equation}
Here, $\y_{\mathrm{p}}$ is a $\nspecies \times 1$ vector with entries $[\y_{\mathrm{p}}]_s = [\y]_s^{[\mathbf{b}_r]_s}$, where $s \in \{1,\dots,\nspecies\}$ and $\mathbf{b}_r$ collects the power-law coefficients of the $r$-th sensor for each gas. The remaining sensor parameters are captured in $\a_r$ and $\mathbf{A}_r$, where $[\mathbf{a}_r]_i \geq 0$ $\forall i$ and $[\mathbf{A}_r]_{ij}\geq0$, $\forall i>j$, and $[\mathbf{A}_r]_{ij}=0$ otherwise.

\begin{remark}
Our proposed algorithms are applicable to arbitrary functions $\f\cdot$ and are not limited to this specific model for \ac{MOS} sensors or to \ac{MOS} sensors in the first place. In fact, our algorithms do not even require a closed-form model for the sensor behavior: One could apply our algorithms to preprocessed measurements, making the algorithms relevant even for digital sensors like the BME680 or the DHT22 sensors which have been used in~\cite{bhattacharjee:breath_patterns_as_signals,alzubi:macroscale_MC_iot_based_pipeline_inspection}.
\end{remark}

\scaleSubsection\subsection{Relevant Noise Distributions}\scaleSubsectionBelow\label{sec:system_model:noise_distributions}
Next, we specialize the noise distributions for the system model proposed in Section~\ref{sec:system_model:overview} to relevant special cases that are commonly found in \ac{MC}. Specifically, we consider two scenarios. In the first scenario, all three noise sources are \textbf{signal-independent and Gaussian}, permitting a simplified analytical treatment. This scenario will be referred to as \ac{SIN} in the remainder of this paper. Signal-independent \ac{TX} noise, characterized by mean $\muvec_{\ntx}^{\SIN}$ and covariance $\covmat_{\ntx}^{\SIN}$, might occur, e.g., due to small errors when controlling the flow rates of an \ac{ODD}~\cite{hopper:multichannel_portable_odor_delivery_device} or due to small uncontrollable leakage. Signal-independent \ac{RX} noise, characterized by mean $\muvec_{\nrx}^{\SIN}$ and covariance $\covmat_{\nrx}^{\SIN}$, might be due to quantization noise or thermal noise in electronic components~\cite{shin:low_frequency_noise_gas_sensors}. For high molecule concentrations and controlled propagation environments, the observed number of molecules at the RX converges to its expectation (see~\cite[Sec. 2.2.2.4]{jamali:thesis}). The channel noise, characterized by $\muvec_{\nc}^{\SIN}$ and covariance $\covmat_{\nc}^{\SIN}$, may then be used to represent uncertainty about the amount of background molecules in the environment, e.g., due to unknown interference which is independent of the currently transmitted signal. We consider Gaussian noise for simplicity\footnote{We note that Gaussian \ac{TX} and channel noise can in principle lead to negative concentrations, i.e., unphysical behavior. In these cases, we set the values in the simulations manually to zero. These events occur only in rare circumstances for the adopted parameter setting.} since it has been shown to be an accurate approximation in air-based \ac{MC} \cite{kim:experimentally_validated_channel_model_for_mc_systems,mcguiness:experimental_results_openair_transmission_macromolecular_communication} and since Gaussian \acp{RV} are readily parametrized by their mean and covariance. Yet, the proposed detector and mixture alphabet design algorithm are also applicable to other noise distributions, although possibly with some performance degradation since they exploit only mean and covariance.

We consider a second more challenging scenario of \textbf{\ac{SDCN}}, while the \ac{TX} and \ac{RX} noise remain signal-independent and Gaussian (i.e., they are described by $\muvec_{\ntx}^{\SIN}$, $\covmat_{\ntx}^{\SIN}$, $\muvec_{\nrx}^{\SIN}$, and $\covmat_{\nrx}^{\SIN}$ as before). \ac{SDCN} $\nc(\y[k])$, which occurs, e.g., in Poisson channels, is commonly assumed in \ac{MC}~\cite{jamali:olfaction_inspired_MC} and often approximated as 
\begin{equation}\label{eq:system_model:sdcn_definition}
    \nc(\y[k]) = \sqrt{\nuc\y[k]} \hadamard \nbase,
\end{equation}
where square root $\sqrt{\cdot}$ is applied element-wise and $\nbase$ is normally distributed with zero-mean and covariance matrix $\I$. 
We introduce a scaling factor $\nuc$ to control the noise power introduced by the channel, with $\nuc=1$ corresponding to the Gaussian approximation of the Poisson channel.

%% file: sections/detector.tex
\scaleSection\section{Moment-based Detection for Non-Linear Models}\scaleSectionBelow\label{sec:detection}
In this section, we present the different proposed detectors,  which are all based on the first- and second-order moments of the sensor outputs $\z[k]$. To this end, we first introduce the basic idea behind these detectors in Section~\ref{sec:detection:overview} before presenting the detectors themselves in Section~\ref{sec:detection:detectors}. Then, in Section~\ref{sec:detection:ut}, we review how first- and second-order moments can be propagated through non-linear functions using the \ac{UT} and explain how the \ac{UT} can be employed to estimate the characteristics of signal-dependent noise. Finally, in Section~\ref{sec:detection:moments}, we derive the moments required by the detectors. 

\scaleSubsection\subsection{Overview}\scaleSubsectionBelow\label{sec:detection:overview}
Since the proposed detectors are inspired by \ac{ML} detection, we briefly review the optimal ML detector for \ac{ISI}-free channels (i.e., $\kappamax=0$) before outlining the concept behind \ac{AML} detection and how this idea can be extended to \ac{ISI} channels. 
For \textbf{\ac{ML} detection}, the
likelihoods of all symbols $s[k] \in \mathcal{S}_0$ are exploited to make decisions. In particular, \ac{ML} detection is performed according to 
\begin{equation}
    \hat{s}^{\mathrm{ML}}[k] = \arg\max_{s[k] \in \mathcal{S}_0} p_{\z[k]|s[k]}(\z[k]),
\end{equation}
where $p_{\z[k]|s[k]}(\z[k])$ denotes the pdf of $\z[k]$ in the \ac{SOD}
domain given a transmitted symbol $s[k]$. \ac{ML} detection not only yields optimal performance in terms of the \ac{SER} for equiprobable symbols but also does not require inversion of $\f\cdot$, which may not be invertible in the first place. Unfortunately, obtaining the likelihoods in the \ac{SOD} in closed form may not be possible for non-linear, cross-reactive \ac{RX} arrays.

Thus, we instead propose an approach for \textbf{\ac{AML} detection} based on the approximate first- and second-order moments of $p_{\z[k]|s[k]}(\z[k])$. In this case, the estimated moments $\muvechat_{\mathbf{z}[k]|s[k]}$ and $\covmathat_{\mathbf{z}[k]|s[k]}$ are used to to parametrize a \ac{pdf} $\hat{p}_{\z[k]|s[k]}(\z)$ approximating $p_{\z[k]|s[k]}(\z[k])$, so that \ac{AML} detection can be performed according to:
\begin{equation}\label{eq:detection:aml}
    \hat{s}^{\mathrm{AML}}[k] = \arg\max_{s[k] \in \mathcal{S}_0} \hat{p}_{\z[k]|s[k]}(\z[k]).
\end{equation}
We focus on first- and second-order moments because they are easily propagated through linear systems and the \ac{UT} provides an efficient tool to approximately propagate them through non-linear functions, as we will discuss in detail in Section~\ref{sec:detection:ut}. For the approximate \acp{pdf}, we use Gaussian distributions, i.e., $\hat{p}_{\z[k]|s[k]}(\z[k]) = \exp[-\frac{1}{2}(\z[k]-\muvechat_{\mathbf{z}[k]|s[k]})\transpose\covmathat_{\mathbf{z}[k]|s[k]}^{-1}(\z[k]-\muvechat_{\mathbf{z}[k]|s[k]})] [(2\pi)^\nsensors \det\{\covmathat_{\mathbf{z}[k]|s[k]}\}]^{-1/2}$, since these are readily parametrized by first- and second-order moments and fit well to experimental data~\cite{mcguiness:experimental_results_openair_transmission_macromolecular_communication}.
The proposed approach based on first- and second-order moments can be straightforwardly extended to scenarios with \ac{ISI} by either accounting for the \textit{average} \ac{ISI} contributed by all past symbol sequences or by employing a sequence detector that exploits decisions from pervious symbol intervals to cope with \ac{ISI}.

\scaleSubsection\subsection{Proposed Detectors}\scaleSubsectionBelow\label{sec:detection:detectors}
In Sections~\ref{sec:detection:isi_unaware}-\ref{sec:detection:sequence}, we introduce the different detectors considered in this work before comparing them in Section~\ref{sec:detection:comparison}. Because all detectors leverage approximate moments that are computed using the \ac{UT}, see Section~\ref{sec:detection:ut}, we refer to them as \ac{UT}-based detectors. All proposed detectors require knowledge of the properties of both \ac{TX} and \ac{RX}, i.e., $\ntx(\xbar[k])$, $\nrx(\f{\y[k]})$, and $\f{\y[k]}$. In practice, these can be obtained via \textit{offline} measurements. Additionally, full channel state information is required, i.e., the channel attenuation $\H[\kappa]$ and $\nc(\ybar[k])$, both of which can be estimated via pilot symbols. However, the task of channel estimation is out of scope of this work and will be addressed in future contributions. 

\subsubsection{ISI-unaware Detection}\label{sec:detection:isi_unaware}
The \ac{ISI}-unaware detector, first introduced in the conference version of this paper~\cite{heinlein:nanocom}, assumes that the channel is \ac{ISI}-free, i.e., $\kappamax=0$, resulting in~\eqref{eq:detection:aml} with $\hat{p}_{\z[k]|s[k]}(\z[k])$ being parametrized by $\muvechat_{\z[k]|s[k]}$ and $\covmathat_{\z[k]|s[k]}$, which will be derived in Section~\ref{sec:detection:moments}. 

\subsubsection{Low-Complexity Symbol-by-Symbol Detection}\label{sec:detection:lc}
As we will show in Section~\ref{sec:evaluation:detector}, the \ac{ISI}-unaware detector makes systematic errors in the presence of channel \ac{ISI}. Therefore, we introduce a \ac{LC} symbol-by-symbol detector which exploits statistical information about the \ac{ISI} by averaging over all possible transmitted sequences in the channel memory, resulting in the following decision rule for the symbol estimate $\shatlc[k]$:
\begin{equation}\label{eq:detection:lc}
    \shatlc[k] = \arg\max_{s[k] \in \mathcal{S}_0} \hat{p}_{\z[k]|\mathcal{S}(s[k])}(\z[k]).
\end{equation}
Here, $\mathcal{S}(s[k])$ denotes all sequences $\s[k] = [s[k-\kappamax], \dots, s[k]]$ that contain $s[k]$ and the approximate likelihood is parametrized by $\muvechat_{\z[k]|\mathcal{S}(s[k])}$ and $\covmathat_{\z[k]|\mathcal{S}(s[k])}$, which will be derived in Section~\ref{sec:detection:moments}. While the decisions based on~\eqref{eq:detection:lc} have the same complexity as those for the \ac{ISI}-unaware detector,~\eqref{eq:detection:aml} the mean and covariance are adapted to account for \ac{ISI}.

\subsubsection{Sequence Detection}\label{sec:detection:sequence}
Finally, we introduce the sequence detector, summarized in Algorithm~\ref{alg:detection:sequence}, which achieves much lower \acp{SER} compared to the \ac{LC} detector by incorporating knowledge from pervious intervals. This detector essentially corresponds to the forward pass of the Bahl-Cocke-Jelinek-Raviv (BCJR) algorithm~\cite{bahl:optimal_decoding_linear_codes_minimizing_ser}\footnote{While additional performance gains can be achieved by implementing the full BCJR algorithm (i.e., including the backward pass), this would make symbol decisions available only after the whole sequence has been decoded. Since this would dramatically increase memory requirements, we limit ourselves to the forward pass.}. In the following, we introduce the individual steps of the proposed sequence detector.

\begin{algorithm}[t]
\caption{Sequence Detector\label{alg:detection:sequence}}
\begin{algorithmic}[1]
\State \textbf{Initialization}
\State \text{Initialize state metrics:} $\alpha_{\s[0]} = 1/ |\mathcal{S}_0|^{\kappamax}$

\State \textbf{Iteration}
\For{$k\in \mathbb{N}_{\geq 0}$}
    \State \text{Compute Likelihoods: } $\hat{p}_{\z[k]|\s[k]}(\z[k])$, $\forall \s[k]$
    \State \text{State metric: } $\alpha'_{\s[k]} = \hat{p}_{\z[k]|\s[k]}(\z[k]) \cdot \sum_{\s[k-1] \in \mathcal{P}(\s[k])} \alpha_{\s[k-1]}$
    \State \text{Normalize metric: } $\alpha_{\s[k]} = \alpha'_{\s[k]} / \sum_{\s[k]} \alpha'_{\s[k]}$
    \State \text{Symbol metric: } $\beta_{s[k]} = \sum_{\s[k] \in \mathcal{S}(s[k])} \alpha_{\s[k]}$
    \State \text{Estimate: } $\hat{s}^{\mathrm{seq}}[k] = \arg\max_{s[k] \in \mathcal{S}_0} \beta_{s[k]}$
\EndFor
\end{algorithmic}
\end{algorithm}

The algorithm considers each sequence $\s[k] = [s[k], \dots, s[k-\kappamax]]$ as a \textit{state} to which a state metric $\alpha_{\s[k]}$ is assigned. At initialization, all state metrics are set to identical values following line 2 of the algorithm because there is no prior information available about previously transmitted symbols.
Then, we compute in each interval first the approximate likelihood $\hat{p}_{\z[k]|\s[k]}(\z[k])$ of each state $\s[k]$ for the current observation $\z[k]$ (line 5), as will be shown in Section~\ref{sec:detection:moments}.

Then, we combine these likelihoods with the information from the previous interval where we already computed the metrics $\alpha_{\s[k-1]}$ for each state $\s[k-1]$ to get the unnormalized state metrics $\alpha'_{\s[k]}$ for the current interval (line 6) according to
\vspace*{-2mm}
\begin{equation}
    \alpha'_{\s[k]} = \hat{p}_{\z[k]|\s[k]}(\z[k]) \cdot \sum_{\s[k-1] \in \mathcal{P}(\s[k])} \alpha_{\s[k-1]},
\end{equation}
where $\mathcal{P}(\s[k])$ denotes the set of all possible sequences $\s[k-1]$ that can precede sequence $\s[k]$. The metrics $\alpha'_{\s[k]}$ are then normalized to obtain $\alpha_{\s[k]}$ according to 
\vspace*{-2mm}
\begin{equation}
    \alpha_{\s[k]} = \alpha'_{\s[k]} / \sum_{\s'[k] \in \mathcal{S}} \alpha'_{\s'[k]},
\end{equation}
where $\mathcal{S}$ denotes the set of all possible sequences $ [s[k], \dots, s[k-\kappamax]]$.
With the new state metrics in place, we can compute the symbol metrics $\beta_{s[k]}$ by marginalizing over the normalized state metrics (line 8). Then, we estimate the symbol as $\hat{s}^{\mathrm{seq}}[k]$ by selecting the symbol with the highest symbol metric (line 9). 

\subsubsection{Comparison}\label{sec:detection:comparison}

After introducing the different detectors, we now provide a high-level comparison of their relative performance in terms of their \textbf{error rates} for a given \ac{SNR} and computational complexity. In the \ac{ISI}-free case, all proposed detectors collapse to the \ac{ISI}-unaware detector resulting in identical \ac{SER} performance.
However, in settings with \ac{ISI}, the \ac{ISI}-unaware detector makes systematic decision errors as the \ac{ISI} shifts the signal means, resulting in a mismatch between its decision regions and the actual sensor outputs, and consequently a high \ac{SER} independent of the \ac{SNR}. While the \ac{LC} alleviates this and achieves a lower \ac{SER} for sufficient \acp{SNR}, the sequence detector achieves the lowest \ac{SER} among all proposed detectors for a given \ac{SNR}.
In terms of \textbf{computational complexity}, the \ac{ISI}-unaware and the \ac{LC} detector are identical and require only one likelihood computation per symbol $s[k] \in \mathcal{S}_0$, i.e., their complexity scales linearly with $|\mathcal{S}_0|$. On the other hand, the sequence detector requires one likelihood computation per sequence $\s[k] \in \mathcal{S}$, with $|\mathcal{S}| = |\mathcal{S}_0|^{\kappamax}$. Additionally, the sequence detector has to store $|\mathcal{S}|$ states and $|\mathcal{S}|$ likelihoods, resulting in a larger memory footprint compared to the \ac{ISI}-unaware and \ac{LC} detectors that need to store only $|\mathcal{S}_0|$ states. 
If the \ac{RX} is computationally constrained but computational power is available at the \ac{TX}, the \ac{LC} detector can be combined with an adaptive transmission scheme, which we will introduce in Section~\ref{sec:mixture:adaptive}, to achieve similar \acp{SER} like the sequence detector.

\scaleSubsection\subsection{Exploiting the Unscented Transform}\scaleSubsectionBelow\label{sec:detection:ut}
Before deriving the moments for the proposed detectors in Section~\ref{sec:detection:moments}, we briefly introduce the \ac{UT}~\cite{julier:unscented_filtering_nonlinear_estimation} and explain how to leverage it to \textbf{propagate moments through non-linear functions} and to estimate moments of complex noise distributions. The \ac{UT} is used to estimate the mean and covariance of an $\nsensors$-dimensional random vector $\z$ which is obtained by feeding an $\nspecies$-dimensional random vector $\y$ with known mean and covariance through a non-linear function $\f\cdot$, i.e., $\z = \f\y$. The main idea behind the UT, as originally introduced for state estimation \cite{julier:unscented_filtering_nonlinear_estimation}, is relatively simple: 
One chooses $\nsigmapoints$ carefully selected points $\y_{\!\sigma,i}$, $i=1, \dots, \nsigmapoints$, so-called sigma points, which match the mean $\muvec_\y$ and covariance $\covmat_\y$ of $\y$. These points $\y_{\!\sigma,i}$ are fed through $\f\cdot$ to obtain $\f{\y_{\!\sigma,i}}$. Then, the mean $\muvec_\z$ and covariance $\covmat_\z$ of $\z$ are estimated as 
\begin{subequations}
    \begin{align}
        \muvechat_\z &= \EUT{\z|\muvec_{y},\covmat_{\y}} = \frac{1}{\nsigmapoints}\sum_{i=1}^{\nsigmapoints} \f{\y_{\!\sigma,i}} \\ %
        \covmathat_{\z} &= \CovUT{\z|\muvec_{y},\covmat_{\y}} = \frac{1}{\nsigmapoints}\sum_{i=1}^{\nsigmapoints} \left( \f{\y_{\!\sigma,i}}-\muvechat_\z \right)\left( \f{\y_{\!\sigma,i}}-\muvechat_\z \right)\transpose. %
    \end{align}
\end{subequations}
Here, $\EUT{\cdot|\cdot}$ and $\CovUT{\cdot|\cdot}$ are the operators for the \ac{UT}-based estimate of the mean and covariance of $\z$, respectively. 
We follow the approach in \cite{julier:unscented_filtering_nonlinear_estimation} and select $\nsigmapoints = 2\nspecies$ sigma points as $\y_{\!\sigma,j} = \muvec_\y + [ (\nspecies \covmat_\y )^{1/2} ]_j$ and $\y_{\!\sigma,j+\nspecies} = \muvec_\y - [ (\nspecies \covmat_\y)^{1/2} ]_j$, where $j = 1, \dots, \nspecies$, and $\left(\nspecies \covmat_\y\right)^{1/2}$ is obtained using the Cholesky decomposition of $\nspecies \covmat_\y$. Note that this is conceptually similar to the idea behind particle filtering, but the \ac{UT} employs only a few carefully selected points~\cite{julier:unscented_filtering_nonlinear_estimation}. 

\begin{remark}
Besides using the \ac{UT} to propagate moments through $\f\cdot$, we also can use it to estimate the moments of noise $\n(\x)$ if the signal $\x$ is described by its mean $\muvec_\x$ and covariance $\covmat_\x$ and we can write $\n(\x)$ as $\n(\x) = \g{\x,\nbase}$. Here, $\g\cdot$ is a multi-dimensional, possibly non-linear function, and $\nbase$ is an appropriately sized vector of zero-mean, unit-variance identically and independently distributed Gaussian \acp{RV}. Then, we can collect $\x$ and $\nbase$ in a vector $\x' = [\x, \nbase]\transpose$ with mean $[\muvec_{\x}, \nullmatrix]\transpose$ and covariance matrix $[\covmat_{\x}, \nullmatrix; \nullmatrix, \I]$ and apply the \ac{UT} to $\g{\cdot,\cdot}$ as described before. Formally, we write the \ac{UT} for mean and covariance estimation of $\n(\x)$ for a given $\muvec_{\x}$ and $\covmat_{\x}$ as $\EUT{\n|\muvec_{\x}, \covmat_{\x}}$ and $\CovUT{\n|\muvec_{\x}, \covmat_{\x}}$, respectively.
\end{remark}

\scaleSubsection\subsection{Derivation of the Moments}\scaleSubsectionBelow\label{sec:detection:moments}
Next, we show how to propagate the first- and second-order moments through the system model before summarizing how the individual propagation steps can be combined to obtain the final moments required by the proposed detectors. 
We start by deriving the moments of $\x[k]$ for a given symbol $s[k]$. 
\begin{proposition}[Moments of $\x$]\label{prop:detection:moments:x}
The first- and second-order moments of $\x[k]$ for a memory-free \ac{TX} can be expressed as:
\vspace*{-4mm}
\begin{subequations}
    \begin{equation}\label{eq:detection:moments:x:general:mu}
      \muvec_{\x[k]|s[k]} = \xbar[k] + \int \ntx \cdot p_{\ntx|\xbar[k]}(\ntx) \mathrm{d}\ntx
    \end{equation}    
    \begin{equation}\label{eq:detection:moments:x:general:C}
      \covmat_{\x[k]|s[k]} = \int \mathbf{c}(\ntx, \muvec_{\ntx|\xbar[k]}) p_{\ntx|\xbar[k]}(\ntx) \mathrm{d}\ntx,
    \end{equation}
    \end{subequations}
    where $\mathbf{c}(\boldsymbol{\alpha},\boldsymbol{\beta}) = (\boldsymbol{\alpha}-\boldsymbol{\beta})(\boldsymbol{\alpha}-\boldsymbol{\beta})\transpose$ with random vector $\boldsymbol{\alpha}$ and vector $\boldsymbol{\beta}$, $p_{\ntx|\xbar[k]}(\ntx)$ is the \ac{pdf} of $\ntx(\xbar)$, and $\muvec_{\ntx|\xbar[k]}$ is the conditional mean of $\ntx$ for a given $\xbar$.
\end{proposition}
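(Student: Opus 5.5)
The plan is to compute the conditional mean and covariance directly from the definition of $\x[k]$ in \eqref{eq:system_model:tx_noise}. The key observation is that, conditioned on the symbol $s[k]$, the desired mixture $\xbar[k]$ is deterministic. This holds because the controller maps $s[k]$ to a fixed element of the alphabet $\symbolalphabet$. Consequently, all randomness in $\x[k]$ comes from the release noise $\ntx(\xbar[k])$. Moreover, since the \ac{TX} is memory-free, the conditional \ac{pdf} of this noise depends only on $\xbar[k]$ and not on earlier symbols, namely $p_{\ntx|\xbar[k]}(\ntx)$. I would state these facts first, so that conditioning on $s[k]$ can be replaced by conditioning on $\xbar[k]$ throughout.

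For the mean, I would apply linearity of expectation to \eqref{eq:system_model:tx_noise}:
\begin{equation*}
\muvec_{\x[k]|s[k]} = \E{\xbar[k] + \ntx \mid s[k]} = \xbar[k] + \E{\ntx \mid \xbar[k]}.
\end{equation*}
Writing the last term as an integral against $p_{\ntx|\xbar[k]}$ gives \eqref{eq:detection:moments:x:general:mu} directly. I would also record that this integral is exactly $\muvec_{\ntx|\xbar[k]}$, because the covariance step needs it.

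For the covariance, I would start from the definition $\covmat_{\x[k]|s[k]} = \E{\mathbf{c}(\x[k],\muvec_{\x[k]|s[k]}) \mid s[k]}$. Substituting the two expressions from the previous step, the deterministic $\xbar[k]$ cancels in the difference:
\begin{equation*}
\x[k]-\muvec_{\x[k]|s[k]} = \ntx - \muvec_{\ntx|\xbar[k]}.
\end{equation*}
Hence the covariance reduces to $\E{\mathbf{c}(\ntx,\muvec_{\ntx|\xbar[k]}) \mid \xbar[k]}$. Writing this expectation as an integral against the conditional \ac{pdf} yields \eqref{eq:detection:moments:x:general:C}.

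None of these steps is a real obstacle. The only point needing care is the justification that conditioning on $s[k]$ makes $\xbar[k]$ deterministic. A second, implicit requirement is that the first two moments of the noise exist, i.e., that the integrals are finite. I would add the finiteness of these moments as a standing assumption rather than prove it.
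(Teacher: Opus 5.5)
Your proposal is correct and follows essentially the same route as the paper. You use that $\xbar[k]$ is deterministic given $s[k]$ and linearity of expectation for the mean, and for the covariance you cancel $\xbar[k]$ in the centered difference, which is the same fact the paper uses when it writes $\Cov{\xbar[k]|s[k]}=\nullmatrix$. Your direct cancellation is slightly cleaner, since it shows explicitly that the cross terms between $\xbar[k]$ and $\ntx$ vanish, which the paper's additive split of the covariance leaves implicit.
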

\begin{proof}[Proof for \eqref{eq:detection:moments:x:general:mu}] 
We first apply the expectation operator to \eqref{eq:system_model:tx_noise} and exploit that the mapping $s[k] \rightarrow \xbar[k]$ is deterministic, yielding
\begin{align*}
    \muvec_{\x[k]|s[k]} = \xbar[k] + \E{\ntx(\xbar[k])|\xbar[k]}.
\end{align*}
Then, we apply the definition of the first-order moment ($\E{x} = \int x \cdot p_x(x) \mathrm{d}x$ for \ac{RV} $x$) to arrive at~\eqref{eq:detection:moments:x:general:mu}.
\end{proof}
\begin{proof}[Proof for \eqref{eq:detection:moments:x:general:C}] The proof is analogous to the proof for \eqref{eq:detection:moments:x:general:mu}, only that the definition of the covariance is inserted into the following expression:
    \begin{align*}
            \covmat_{\x[k]|s[k]} = \Cov{\xbar[k] + \ntx(\xbar[k]) |s[k]} = \underbrace{\Cov{\xbar[k]|s[k]}}_{=\nullmatrix} + \Cov{\ntx(\xbar[k])|s[k]}. \;\;\;\; \makebox[0pt][l]{\qedhere}
        \end{align*}
\end{proof}

\begin{corollary}[\ac{SIN}, \ac{SDCN}] For both \ac{SIN} and \ac{SDCN}, the \ac{TX} noise is signal-independent and thus, we have
\vspace*{-3mm}
\begin{subequations}
    \begin{equation}\label{eq:detection:moments:x:sinsdcn:mu}
      \muvec_{\x[k]|s[k]} = \xbar[k] + \muvec_{\ntx}^{\SIN}
    \end{equation}    
    \begin{equation}\label{eq:detection:moments:x:sinsdcn:C}
      \covmat_{\x[k]|s[k]} = \covmat_{\ntx}^{\SIN}.
    \end{equation}
    \end{subequations}
\end{corollary}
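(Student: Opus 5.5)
The corollary follows by specializing Proposition~\ref{prop:detection:moments:x} to the noise model of Section~\ref{sec:system_model:noise_distributions}. For both \ac{SIN} and \ac{SDCN}, the \ac{TX} noise is Gaussian with mean $\muvec_{\ntx}^{\SIN}$ and covariance $\covmat_{\ntx}^{\SIN}$, independent of $\xbar[k]$. So the conditional \ac{pdf} $p_{\ntx|\xbar[k]}(\ntx)$ does not depend on $\xbar[k]$; it is the density of $\mathcal{N}(\muvec_{\ntx}^{\SIN},\covmat_{\ntx}^{\SIN})$. The difference between the two scenarios lies only in the channel noise $\nc$, which does not enter \eqref{eq:system_model:tx_noise}. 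Both scenarios can therefore be handled by a single argument.

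For the mean, I insert this density into \eqref{eq:detection:moments:x:general:mu}. The integral $\int \ntx\, p_{\ntx|\xbar[k]}(\ntx)\,\mathrm{d}\ntx$ is by definition the first moment of $\ntx$, which equals $\muvec_{\ntx}^{\SIN}$. This gives \eqref{eq:detection:moments:x:sinsdcn:mu}. As a by-product, the conditional mean $\muvec_{\ntx|\xbar[k]}$ appearing in \eqref{eq:detection:moments:x:general:C} also equals $\muvec_{\ntx}^{\SIN}$.

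For the covariance, I substitute $\muvec_{\ntx|\xbar[k]}=\muvec_{\ntx}^{\SIN}$ into \eqref{eq:detection:moments:x:general:C}. The integrand becomes $(\ntx-\muvec_{\ntx}^{\SIN})(\ntx-\muvec_{\ntx}^{\SIN})\transpose$ weighted by the Gaussian density. This is exactly the definition of the covariance matrix of $\ntx$, namely $\covmat_{\ntx}^{\SIN}$, which yields \eqref{eq:detection:moments:x:sinsdcn:C}.

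There is no real obstacle here. The only point needing care is to state explicitly that in the \ac{SDCN} scenario only the channel noise is signal-dependent, while the \ac{TX} noise keeps the signal-independent Gaussian parametrization. The Gaussian assumption is in fact not needed: only the first two moments of $\ntx$ and their independence of $\xbar[k]$ enter the argument.
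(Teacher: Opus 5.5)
Your proof is correct and takes essentially the same route as the paper, which applies the expectation and covariance operators directly to \eqref{eq:system_model:tx_noise} and uses that the \ac{TX} noise is parametrized by $\muvec_{\ntx}^{\SIN}$ and $\covmat_{\ntx}^{\SIN}$. Specializing the integral formulas of Proposition~\ref{prop:detection:moments:x} amounts to the same computation, and your remark that only the first two moments of $\ntx$ and their independence of $\xbar[k]$ are needed, not Gaussianity, is accurate.
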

\begin{proof}Apply $\E{\cdot}$ and $\Cov\cdot$ to \eqref{eq:system_model:tx_noise} and then exploit that the symbol-independent \ac{TX} noise is parametrized by $\muvec_{\ntx}^{\SIN}$ and $\covmat_{\ntx}^{\SIN}$.
\end{proof}
\begin{proposition}[Moments of $\ybar$ for a given sequence]\label{prop:detection:moments:yprime:sequence}The mean and covariance of $\ybar[k]$ for a given symbol sequence $\s[k]$ can be written in terms of the \ac{TX} noise as follows:
\vspace*{-2mm}
\begin{subequations}
    \begin{equation}\label{eq:detection:moments:yprime:seq:general:mu}
        \muvec_{\ybar[k]|\s[k]} = \sum_{\kappa=0}^{\kappamax} \H[\kappa] \muvec_{\x[k-\kappa]|s[k-\kappa]}
    \end{equation}
    \begin{equation}\label{eq:detection:moments:yprime:seq:general:C}
        \covmat_{\ybar[k]|\s[k]} = \sum_{\kappa=0}^{\kappamax} \H[\kappa] \covmat_{\x[k-\kappa]|s[k-\kappa]} \H\transpose[\kappa].
    \end{equation}
\end{subequations}
\end{proposition}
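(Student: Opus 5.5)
Conditioning on the sequence $\s[k]$ fixes every desired mixture $\xbar[k-\kappa]$, $\kappa=0,\dots,\kappamax$, because the mapping from symbol indices to mixtures is deterministic. I will then apply the expectation and covariance operators to \eqref{eq:system_model:isi}. The only randomness left in $\ybar[k]$ comes from the \ac{TX} noise terms $\ntx(\xbar[k-\kappa])$, which enter through $\x[k-\kappa] = \xbar[k-\kappa] + \ntx(\xbar[k-\kappa])$ and are known in distribution by Proposition~\ref{prop:detection:moments:x}.

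\textbf{Mean.} I use linearity of expectation and the fact that each $\H[\kappa]$ is a deterministic diagonal matrix:
\begin{equation*}
\E{\ybar[k]|\s[k]} = \sum_{\kappa=0}^{\kappamax} \H[\kappa]\, \E{\x[k-\kappa]|\s[k]}.
\end{equation*}
Because the \ac{TX} is memory-free, the distribution of $\x[k-\kappa]$ depends on $\s[k]$ only through $s[k-\kappa]$. Hence $\E{\x[k-\kappa]|\s[k]} = \muvec_{\x[k-\kappa]|s[k-\kappa]}$, which gives \eqref{eq:detection:moments:yprime:seq:general:mu}.

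\textbf{Covariance.} I write $\ybar[k]-\muvec_{\ybar[k]|\s[k]} = \sum_\kappa \H[\kappa](\x[k-\kappa]-\muvec_{\x[k-\kappa]|s[k-\kappa]})$ and expand the outer product. This produces the diagonal terms $\H[\kappa]\covmat_{\x[k-\kappa]|s[k-\kappa]}\H\transpose[\kappa]$ and the cross terms $\H[\kappa]\,\mathrm{E}\{(\x[k-\kappa]-\muvec)(\x[k-\kappa']-\muvec)\transpose|\s[k]\}\,\H\transpose[\kappa']$ for $\kappa\neq\kappa'$.

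The main obstacle is justifying that the cross terms vanish, i.e., that the \ac{TX} noise realizations in different symbol intervals are conditionally independent (or at least uncorrelated) given $\s[k]$. I will take this from the memory-free \ac{TX} assumption stated with \eqref{eq:system_model:tx_noise}, read as noise drawn independently per interval given that interval's desired mixture. Under that reading, the conditional expectation of each cross product factorizes into a product of zero-mean terms and therefore equals $\nullmatrix$. Only the diagonal terms survive, which yields \eqref{eq:detection:moments:yprime:seq:general:C}. I will state this independence assumption explicitly in the proof, since the result does not hold without it.
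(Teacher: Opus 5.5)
Your proposal is correct and follows essentially the same route as the paper: linearity of expectation plus the memory-free TX for the mean, and $\mathrm{Cov}\{\mathbf{W}\mathbf{x}\} = \mathbf{W}\,\mathrm{Cov}\{\mathbf{x}\}\,\mathbf{W}^{\mathrm{T}}$ with vanishing cross-interval terms for the covariance. Your explicit argument that the cross terms vanish, via conditional independence of the TX noise across intervals, states precisely the assumption the paper leaves implicit when it calls the covariance operator ``linear with regard to sums.''
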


\begin{proof}[Proof for \eqref{eq:detection:moments:yprime:seq:general:mu}]We exploit the linearity of the expectation operator together with the memory-free property of the \ac{TX}, i.e., that $\x[k]$ depends only on $s[k]$ and not on previous symbols:
\vspace*{-4mm}
\begin{align*}
    \muvec_{\ybar[k]|\s[k]} &= \E{ \sum_{\kappa=0}^{\kappamax} \H[\kappa] \x[k-\kappa] |\s[k]} \\
    &= \sum_{\kappa=0}^{\kappamax} \H[\kappa] \E{\x[k-\kappa] |s[k-\kappa]} 
    = \sum_{\kappa=0}^{\kappamax} \H[\kappa] \muvec_{\x[k-\kappa]|s[k-\kappa]}.\qedhere
\end{align*}
\end{proof}

\begin{proof}[Proof for \eqref{eq:detection:moments:yprime:seq:general:C}]Analogously, we exploit the memory-free property of the \ac{TX}, that the covariance operator is linear with regard to sums, and that $\Cov{\W\x} = \W \Cov{\x} \W\transpose$:
\begin{align*}
    \covmat_{\ybar[k]|\s[k]}
    &= \Cov{\sum_{\kappa=0}^{\kappamax}\H[\kappa] \x[k-\kappa] |\s[k]} \\
    &= \sum_{\kappa=0}^{\kappamax} \H[\kappa] \Cov{\x[k-\kappa] |s[k-\kappa]} \H\transpose[\kappa] 
    = \sum_{\kappa=0}^{\kappamax} \H[\kappa] \covmat_{\x[k-\kappa]|s[k-\kappa]}\H\transpose[\kappa].\qedhere
\end{align*}
\vspace*{-5mm}\end{proof}

\begin{corollary}[\ac{SIN}, \ac{SDCN}] For \ac{SIN} and \ac{SDCN}, the mean of $\ybar[k]$ can be decomposed into a symbol-dependent contribution and a constant noise contribution while the covariance is entirely symbol-independent:
\vspace*{-6mm}
\begin{subequations}
    \begin{equation}\label{eq:detection:moments:yprime:seq:sinsdcn:mu}
        \muvec_{\ybar[k]|\s[k]}^{\SIN} = \underbrace{\sum_{\kappa=0}^{\kappamax} \H[\kappa] \xbar[k-\kappa]}_{\mathrm{desired}} + \underbrace{\sum_{\kappa=0}^{\kappamax} \H[\kappa] \muvec_{\ntx}^{\SIN}}_{\mathrm{const.\;noise}}
    \end{equation}
    \begin{equation}\label{eq:detection:moments:yprime:seq:sinsdcn:C}
        \covmat_{\ybar[k]|\s[k]}^{\SIN} = \sum_{\kappa=0}^{\kappamax} \H[\kappa] \covmat_{\ntx}^{\SIN} \H\transpose[\kappa].
    \end{equation}
\end{subequations}
\end{corollary}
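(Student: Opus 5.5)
The plan is to specialize Proposition~\ref{prop:detection:moments:yprime:sequence} by substituting into it the moments of $\x[k-\kappa]$ from the preceding corollary for the \ac{SIN}/\ac{SDCN} case. Only the \ac{TX} noise enters $\ybar[k]$. Channel and \ac{RX} noise act after~\eqref{eq:system_model:isi}, so whether the channel noise is signal-independent or signal-dependent does not matter here. Hence the two scenarios need not be treated separately: in both, the \ac{TX} noise is signal-independent Gaussian with parameters $\muvec_{\ntx}^{\SIN}$ and $\covmat_{\ntx}^{\SIN}$.

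For the mean, I would insert~\eqref{eq:detection:moments:x:sinsdcn:mu}, applied at each delay $\kappa$ as $\muvec_{\x[k-\kappa]|s[k-\kappa]} = \xbar[k-\kappa] + \muvec_{\ntx}^{\SIN}$, into~\eqref{eq:detection:moments:yprime:seq:general:mu}. Distributing $\H[\kappa]$ over the sum and splitting the summation then directly gives the desired term $\sum_{\kappa}\H[\kappa]\xbar[k-\kappa]$, which depends on the sequence $\s[k]$ through the deterministic map $s\to\xbar$. It also gives the constant term $\sum_\kappa \H[\kappa]\muvec_{\ntx}^{\SIN}$, which is independent of $\s[k]$. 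This is exactly~\eqref{eq:detection:moments:yprime:seq:sinsdcn:mu}.

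For the covariance, I would insert~\eqref{eq:detection:moments:x:sinsdcn:C}, i.e., $\covmat_{\x[k-\kappa]|s[k-\kappa]} = \covmat_{\ntx}^{\SIN}$ for every $\kappa$, into~\eqref{eq:detection:moments:yprime:seq:general:C}. The result is~\eqref{eq:detection:moments:yprime:seq:sinsdcn:C}. No term in it depends on $\s[k]$, which establishes the claimed symbol-independence.

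No real obstacle is expected. The only point needing care is the validity of Proposition~\ref{prop:detection:moments:yprime:sequence} itself, namely the absence of cross-covariance terms between different delays. That requires the \ac{TX} noise realizations $\ntx$ in different symbol intervals to be independent, which is the memory-free \ac{TX} assumption. Since the proposition is taken as given, it suffices to note that the memory-free \ac{TX} assumption holds in both the \ac{SIN} and \ac{SDCN} scenarios.
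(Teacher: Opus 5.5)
Your proposal is correct and follows the paper's proof: you insert $\muvec_{\x[k]|s[k]} = \xbar[k]+\muvec_{\ntx}^{\SIN}$ and $\covmat_{\x[k]|s[k]} = \covmat_{\ntx}^{\SIN}$ into the general expressions of Proposition~\ref{prop:detection:moments:yprime:sequence}, exactly as the paper does. Your remark that the missing cross-covariance terms require the \ac{TX} noise to be independent across symbol intervals is also correct; the paper leaves this part of the memory-free assumption implicit, so you are slightly more careful on that point.
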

\begin{proof}
For the mean decomposition, insert $\muvec_{\x[k]|s[k]} = \xbar[k]+\muvec_{\ntx}^{\SIN}$ into~\eqref{eq:detection:moments:yprime:seq:general:mu}. For the covariance, insert $\covmat_{\x[k]|s[k]} = \covmat_{\ntx}^{\SIN}$ into \eqref{eq:detection:moments:yprime:seq:general:C}.
\end{proof}

\begin{proposition}[Moments of $\ybar$ for all possible sequences]\label{prop:detection:moments:yprime:all_sequences}
The mean and covariance of $\ybar[k]$ for symbol $s[k]$ and unknown previous symbols are described by \eqref{eq:detection:moments:yprime:all:general:mu} and \eqref{eq:detection:moments:yprime:all:general:C}, shown at the top of the this page, where $\mathcal{S}(s[k])$ is the set of all sequences $\s[k]$ that contain symbol $s[k]$ and we have $\muvec' = \frac{1}{|\mathcal{S}_0|} \sum_{s \in \mathcal{S}_0} \muvec_{\x|s}$.
\begin{figure*}
    \centering
    \noindent\makebox[\linewidth]{\rule{\linewidth}{0.4pt}}%
    \begin{subequations}
    \begin{equation}\label{eq:detection:moments:yprime:all:general:mu}
        \muvec_{\ybar[k]|\mathcal{S}(s[k])} = \H[0] \muvec_{\x[k]|s[k]} + \sum_{\kappa=1}^{\kappamax} \H[\kappa] \left(\frac{1}{|\mathcal{S}_0|} \sum_{s \in \mathcal{S}_0} \muvec_{\x[k-\kappa]|s} \right)
    \end{equation}
    \begin{equation}\label{eq:detection:moments:yprime:all:general:C}
        \covmat_{\ybar[k]|\mathcal{S}(s[k])} = \H[0] \covmat_{\x[k]|s[k]} \H\transpose[0] + \sum_{\kappa=1}^{\kappamax} \H[\kappa] \left( \frac{1}{|\mathcal{S}_0|} \sum_{s \in \mathcal{S}_0} (\muvec_{\x|s} -\muvec')(\muvec_{\x|s} -\muvec')\transpose + \covmat_{\x|s} \right) \H\transpose[\kappa]
    \end{equation}
    \end{subequations}
    \noindent\makebox[\linewidth]{\rule{\linewidth}{0.4pt}}
\end{figure*}

\end{proposition}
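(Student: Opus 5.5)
The plan is to condition on the current symbol $s[k]$ and treat the past symbols $s[k-1],\dots,s[k-\kappamax]$ as random. I will model them as i.i.d. and uniformly distributed over $\mathcal{S}_0$, and independent of $s[k]$; this is the implicit modelling assumption behind averaging over $\mathcal{S}(s[k])$.

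I will then combine two facts:
\begin{itemize}
\item the memory-free property of the \ac{TX} (as used in Proposition~\ref{prop:detection:moments:yprime:sequence});
\item the independence of the \ac{TX} noise across intervals.
\end{itemize}
Together they make the random vectors $\x[k], \x[k-1],\dots,\x[k-\kappamax]$ mutually independent given $s[k]$. Hence, exactly as in the proof of Proposition~\ref{prop:detection:moments:yprime:sequence}, linearity of expectation and additivity of covariance over independent summands (with $\Cov{\W\x}=\W\Cov{\x}\W\transpose$) give
\begin{align*}
\muvec_{\ybar[k]|\mathcal{S}(s[k])} &= \H[0]\E{\x[k]|s[k]} + \sum_{\kappa=1}^{\kappamax}\H[\kappa]\E{\x[k-\kappa]},\\
\covmat_{\ybar[k]|\mathcal{S}(s[k])} &= \H[0]\Cov{\x[k]|s[k]}\H\transpose[0] + \sum_{\kappa=1}^{\kappamax}\H[\kappa]\Cov{\x[k-\kappa]}\H\transpose[\kappa],
\end{align*}
where the past-interval moments are now unconditional, i.e., taken over the random past symbol. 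The $\kappa=0$ terms are given directly by Proposition~\ref{prop:detection:moments:x}.

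The remaining work is to compute the unconditional moments of $\x[k-\kappa]$, $\kappa\ge1$.
\begin{itemize}
\item \textbf{Mean.} By the law of total expectation with uniform $s[k-\kappa]$, $\E{\x[k-\kappa]}=\frac{1}{|\mathcal{S}_0|}\sum_{s\in\mathcal{S}_0}\muvec_{\x|s}=\muvec'$. This yields \eqref{eq:detection:moments:yprime:all:general:mu}.
\item \textbf{Covariance.} By the law of total covariance, $\Cov{\x}=\E{\Cov{\x|s}}+\Cov{\E{\x|s}}$. The first term is $\frac{1}{|\mathcal{S}_0|}\sum_s\covmat_{\x|s}$, and the second is $\frac{1}{|\mathcal{S}_0|}\sum_s(\muvec_{\x|s}-\muvec')(\muvec_{\x|s}-\muvec')\transpose$. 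Substituting gives \eqref{eq:detection:moments:yprime:all:general:C}.
\end{itemize}

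The only step needing care is the law of total covariance. I would verify it by expanding $\E{(\x-\muvec')(\x-\muvec')\transpose}$ conditioned on $s$: add and subtract $\muvec_{\x|s}$, and note that the cross terms vanish because $\E{\x-\muvec_{\x|s}\,|\,s}=\nullmatrix$.

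A secondary point is to state explicitly the independence assumptions that make the cross-covariances between different $\x[k-\kappa]$ vanish. These are independent past symbols and noise that is independent across intervals. Otherwise, covariance additivity across the sum in \eqref{eq:system_model:isi} would fail.
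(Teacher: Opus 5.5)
Your proposal is correct and follows essentially the same route as the paper: you split the sum in \eqref{eq:system_model:isi} into the $\kappa=0$ term, conditioned on $s[k]$, and the $\kappa\geq1$ terms, averaged uniformly over $\mathcal{S}_0$, and then apply the law of total expectation and the law of total covariance to the latter. Your explicit statement of the assumptions (i.i.d.\ uniform past symbols independent of $s[k]$, and \ac{TX} noise independent across intervals) makes precise what the paper uses implicitly when it invokes ``linearity of the covariance operator with regard to sums.''
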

\begin{proof}[Proof for \eqref{eq:detection:moments:yprime:all:general:mu}] Applying the expectation operator to \eqref{eq:system_model:isi} and exploiting its linearity yields
\begin{equation*}
    \muvec_{\ybar[k]|\mathcal{S}(s[k])} = \sum_{\kappa=0}^{\kappamax} \H[\kappa] \E{\x[k-\kappa] | \mathcal{S}(s[k])}.
\end{equation*}
Then, we split the sum into two parts, one for $\kappa=0$ and one for $\kappa > 0$, and exploit that $\E{\cdot|\mathcal{S}(s[k])}=\E{\cdot|s[k]}$ for $\kappa=0$ and $\E{\cdot|\mathcal{S}(s[k])}=\E{\cdot|\mathcal{S}_0}$ for $\kappa > 0$. This yields
\begin{equation*}
    \muvec_{\ybar[k]|\mathcal{S}(s[k])} \!= \!\H[0] \E{\x[k] | s[k]} + \!\sum_{\kappa=1}^{\kappamax}\! \H[\kappa] \E{\x[k-\kappa] | \mathcal{S}_0}.
\end{equation*}
To get the final result, we use $\E{\x[k-\kappa] | \mathcal{S}_0} = \frac{1}{|\mathcal{S}_0|} \sum_{s \in \mathcal{S}_0} \muvec_{\x|s}$ for $\kappa>0$, where $\muvec_{\x|s}$ is the mean of $\x$ for a given symbol $s$.
\end{proof}
\begin{proof}[Proof for \eqref{eq:detection:moments:yprime:all:general:C}] To obtain $\covmat_{\ybar[k]|\mathcal{S}(s[k])}$, we first exploit the linearity of the covariance operator with regard to sums and split the sum into one part for $\kappa = 0$ and one for $\kappa>0$, yielding
\begin{equation*}
    \begin{aligned}
        \covmat_{\ybar[k]|\mathcal{S}(s[k])} &= \H[0] \Cov{\x[k] | s[k]} \H\transpose[0] \\&+ \sum_{\kappa=1}^{\kappamax} \H[\kappa] \Cov{\x[k-\kappa] | \mathcal{S}_0} \H\transpose[\kappa].
    \end{aligned}
\end{equation*}
Then, we can set $\Cov{\x[k] | s[k]} = \covmat_{\x[k]|s[k]}$ and exploit the law of total covariance for \mbox{$\kappa>0$, i.e., }
\begin{equation*}
    \Cov{\x[k-\kappa]|\mathcal{S}_0} = \Cov{\E{\x|s}|\mathcal{S}_0} + \E{\Cov{\x|s}|\mathcal{S}_0},
\end{equation*}
where we dropped the time index for brevity. 
We know that $\E{\x|s} = \muvec_{\x|s}$ and thus $\Cov{\muvec_{\x|s}|\mathcal{S}_0} = \frac{1}{|\mathcal{S}_0|}\sum_{s}(\muvec_{\x|s}-\muvec')(\muvec_{\x|s}-\muvec')\transpose$, where $\muvec'=\frac{1}{|\mathcal{S}_0|} \sum_{s \in \mathcal{S}_0}\muvec_{\x|s}$. Finally, we use that $\E{\Cov{\x|s}|\mathcal{S}_0} = \frac{1}{|\mathcal{S}_0|} \sum_{s \in \mathcal{S}_0} \covmat_{\x|s}$ to obtain \eqref{eq:detection:moments:yprime:all:general:C}. 
\end{proof}

\begin{corollary}[\ac{SIN}, \ac{SDCN}]\label{sec:detection:moments:yprime:sin_sdcn}
For both \ac{SIN} and \ac{SDCN}, the mean can be divided into a contribution from the current symbol and a constant contribution from noise and the previous symbols. The covariance can be divided into one component due to the transmitter noise and into one component due to the lack of knowledge about previously transmitted symbols:
\begin{subequations}
    \begin{equation}\label{eq:detection:moments:yprime:all:sinsdcn:mu}
        \muvec_{\ybar[k]|\mathcal{S}(s[k])}^{\SIN} =  \hspace*{-2mm}\underbrace{\H[0]\xbar}_{\mathrm{symbol\; contr.}}\hspace*{-2mm} +\! \underbrace{\sum_{\kappa=1}^{\kappamax} \H[\kappa] \bar{\muvec}\!\!+ \!\!\sum_{\kappa=0}^{\kappamax} \!\!\H[\kappa] \muvec_{\ntx}^{\SIN}}_{\mathrm{constant\; contr.}}
    \end{equation}
    \begin{equation}\label{eq:detection:moments:yprime:all:sinsdcn:C}
        \begin{aligned}
            \covmat_{\ybar[k]|\mathcal{S}(s[k])}^{\SIN} &= \underbrace{\covmat_{\ybar[k]|\s[k]}^{\SIN}}_{\mathrm{TX\; noise}} + \underbrace{\sum_{\kappa=1}^{\kappamax} \H[\kappa]\! \left(\!\!\frac{1}{|\mathcal{S}_0|}\! \sum_{\xbar \in \symbolalphabet} (\xbar -\bar{\muvec})(\xbar -\bar{\muvec})\transpose \right) \H\transpose[\kappa]}_{\mathrm{symbol\; uncertainty}},
        \end{aligned}
    \end{equation}
\end{subequations}
where $\bar{\muvec} = \frac{1}{|\mathcal{S}_0|} \sum_{\xbar \in \symbolalphabet}\xbar$.
\end{corollary}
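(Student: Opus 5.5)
The plan is to specialize Proposition~\ref{prop:detection:moments:yprime:all_sequences} by inserting the \ac{SIN}/\ac{SDCN} moments of $\x$, namely $\muvec_{\x|s} = \xbar_s + \muvec_{\ntx}^{\SIN}$ and $\covmat_{\x|s} = \covmat_{\ntx}^{\SIN}$ from the corollary to Proposition~\ref{prop:detection:moments:x}. Here $\xbar_s$ denotes the mixture in $\symbolalphabet$ associated with symbol $s$. Throughout, I use that the mapping $s \mapsto \xbar_s$ is a bijection between $\mathcal{S}_0$ and $\symbolalphabet$. This bijection lets sums over $s \in \mathcal{S}_0$ be rewritten as sums over $\xbar \in \symbolalphabet$, with $|\symbolalphabet| = |\mathcal{S}_0| = \nsymbols$.

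\textbf{Mean.} I first compute the average $\muvec' = \frac{1}{|\mathcal{S}_0|}\sum_{s}(\xbar_s + \muvec_{\ntx}^{\SIN}) = \bar{\muvec} + \muvec_{\ntx}^{\SIN}$. Inserting this into \eqref{eq:detection:moments:yprime:all:general:mu} gives
\[
\muvec_{\ybar[k]|\mathcal{S}(s[k])} = \H[0]\xbar[k] + \H[0]\muvec_{\ntx}^{\SIN} + \sum_{\kappa=1}^{\kappamax}\H[\kappa]\bar{\muvec} + \sum_{\kappa=1}^{\kappamax}\H[\kappa]\muvec_{\ntx}^{\SIN}.
\]
Merging the two noise terms into a single sum over $\kappa=0,\dots,\kappamax$ yields \eqref{eq:detection:moments:yprime:all:sinsdcn:mu}. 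Only $\H[0]\xbar[k]$ depends on $s[k]$; the rest is constant.

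\textbf{Covariance.} I handle the bracketed term of \eqref{eq:detection:moments:yprime:all:general:C} for $\kappa\ge1$ as follows. Since $\muvec_{\x|s}-\muvec' = \xbar_s - \bar{\muvec}$, the additive noise mean cancels in the deviation. The bracket therefore becomes
\[
\frac{1}{|\mathcal{S}_0|}\sum_{\xbar\in\symbolalphabet}(\xbar-\bar{\muvec})(\xbar-\bar{\muvec})\transpose + \covmat_{\ntx}^{\SIN}.
\]
Here I read the $\covmat_{\x|s}$ term as averaged over $s$, as in the proof of Proposition~\ref{prop:detection:moments:yprime:all_sequences}. It equals $\covmat_{\ntx}^{\SIN}$ because it does not depend on $s$.

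The covariance then splits into two groups of terms. The first group is $\H[0]\covmat_{\ntx}^{\SIN}\H\transpose[0] + \sum_{\kappa\ge1}\H[\kappa]\covmat_{\ntx}^{\SIN}\H\transpose[\kappa]$, which is exactly \eqref{eq:detection:moments:yprime:seq:sinsdcn:C}, i.e., $\covmat_{\ybar[k]|\s[k]}^{\SIN}$ (the ``TX noise'' part). The second group is the symbol-uncertainty sum over $\kappa\ge1$. Together they give \eqref{eq:detection:moments:yprime:all:sinsdcn:C}.

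\textbf{Main obstacle.} There is no real analytical obstacle. The only points needing care are bookkeeping. The first is making explicit that the $\covmat_{\x|s}$ term in \eqref{eq:detection:moments:yprime:all:general:C} is averaged over $s$. The second is reindexing the sum from $\mathcal{S}_0$ to $\symbolalphabet$ through the bijection. The third is observing that the constant noise mean drops out of the centered outer products.
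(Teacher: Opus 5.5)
Your proposal is correct and follows the same route as the paper: substitute $\muvec_{\x|s} = \xbar + \muvec_{\ntx}^{\SIN}$ and $\covmat_{\x|s} = \covmat_{\ntx}^{\SIN}$ into Proposition~\ref{prop:detection:moments:yprime:all_sequences}, use $\muvec' = \bar{\muvec} + \muvec_{\ntx}^{\SIN}$ and $\muvec_{\x|s}-\muvec' = \xbar-\bar{\muvec}$, and regroup the sums. Your extra bookkeeping only makes explicit what the paper's short proof leaves implicit: the bijection between $\mathcal{S}_0$ and $\symbolalphabet$, and identifying the collected $\covmat_{\ntx}^{\SIN}$ terms as $\covmat_{\ybar[k]|\s[k]}^{\SIN}$.
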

\begin{proof}[Proof for \eqref{eq:detection:moments:yprime:all:sinsdcn:mu}]We insert $\muvec_{\x[k]|s[k]} = \xbar[k] + \muvec_{\ntx}^{\SIN}$ into \eqref{eq:detection:moments:yprime:all:general:mu}, exploit that $\frac{1}{|\mathcal{S}_0|} \sum_{s[k] \in \mathcal{S}_0} \muvec_{\x[k]|s[k]}=\bar{\muvec}+\muvec_{\ntx}^{\SIN}$, and split the sums.
\end{proof}
\begin{proof}[Proof for \eqref{eq:detection:moments:yprime:all:sinsdcn:C}]The covariance can be obtained from~\eqref{eq:detection:moments:yprime:all:general:C} by setting $\covmat_{\x|s}=\covmat_{\ntx}^{\SIN}$ and taking the sum over all covariance matrices. For the final expression, we use $\muvec_{\x|s}-\muvec' = \xbar-\bar{\muvec}$.
\end{proof}

\begin{remark}\label{remark:detection:moments:yprime:sin_sdcn} 
    From Corollary~\ref{sec:detection:moments:yprime:sin_sdcn}, it can be seen that i) the uncertainty in this setting is strictly larger than that for a specific sequence $\s[k]$ and ii) the \textit{symbol uncertainty} is the dominating factor in settings with strong \ac{ISI}. The latter becomes apparent when considering that the mixtures are usually spread over the whole feasible set $\feasibleset$, so that the covariance $\sum_{\xbar \in \symbolalphabet} (\xbar -\bar{\muvec})(\xbar -\bar{\muvec})\transpose$ is comparable to the size of $\feasibleset$, whereas $\covmat_{\ntx}^{\SIN}$ is typically much smaller than $\feasibleset$.
\end{remark}

\begin{proposition}[Moments of $\y$]\label{prop:detection:moments:y}The approximate\footnote{Note that we can only obtain approximations in the general case because we have only access to the mean and covariance of $\ybar[k]$ but the full \ac{pdf} of $\ybar[k]$ would be required to obtain the exact mean and covariance of $\nc(\ybar[k])$.} mean and covariance of $\y[k]$ can be determined by combining the results from Proposition~\ref{prop:detection:moments:yprime:sequence} or Propositions~\ref{prop:detection:moments:yprime:all_sequences} with the additional uncertainty introduced by channel noise according to 
\begin{subequations}
    \begin{equation}\label{eq:detection:moments:y:general:mu}
        \muvechat_{\y[k]|C} = \muvec_{\ybar[k]|C} + \EUT{\nc|\muvec_{\ybar[k]|C}, \covmat_{\ybar[k]|C}}
    \end{equation}
    \begin{equation}\label{eq:detection:moments:y:general:C}
        \covmathat_{\y[k]|C} = \covmat_{\ybar[k]|C} + \CovUT{\nc|\muvec_{\ybar[k]|C}, \covmat_{\ybar[k]|C}},
    \end{equation}
\end{subequations}
where $C \in \{\s[k], \mathcal{S}(s[k])\}$. The statistics of $\nc(\ybar[k])$ can be tedious to derive for signal-dependent noise but are easily estimated using the \ac{UT} as described in Section~\ref{sec:detection:ut}. For a solution via integration over conditional \acp{pdf}, we refer the reader to~\cite{heinlein:nanocom}. 
\end{proposition}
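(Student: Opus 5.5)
We start from the channel-noise equation \eqref{eq:system_model:channel_noise}, $\y[k] = \ybar[k] + \nc(\ybar[k])$, and treat the two summands separately. We condition throughout on $C \in \{\s[k], \mathcal{S}(s[k])\}$.

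\textbf{Mean.} By linearity of the expectation, $\E{\y[k]|C} = \muvec_{\ybar[k]|C} + \E{\nc(\ybar[k])|C}$. The first term is supplied by Proposition~\ref{prop:detection:moments:yprime:sequence} or Proposition~\ref{prop:detection:moments:yprime:all_sequences}, depending on $C$. For the second term, we write the channel noise as $\nc(\ybar[k]) = \g{\ybar[k],\nbase}$, with $\nbase$ standard Gaussian and independent of $\ybar[k]$. For the \ac{SDCN} model \eqref{eq:system_model:sdcn_definition} this is explicit: $\g{\y,\nbase} = \sqrt{\nuc \y}\hadamard\nbase$. The exact value $\E{\g{\ybar[k],\nbase}|C}$ would require the full conditional pdf of $\ybar[k]$. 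We only have its first two moments, so we replace it by the \ac{UT} estimate from the Remark in Section~\ref{sec:detection:ut}, applied to the stacked vector $[\ybar[k],\nbase]\transpose$ with mean $[\muvec_{\ybar[k]|C},\nullmatrix]\transpose$ and block-diagonal covariance. This yields \eqref{eq:detection:moments:y:general:mu}, and the replacement is exactly where the word ``approximate'' enters.

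\textbf{Covariance.} We expand $\Cov{\ybar[k]+\nc(\ybar[k])|C}$. This gives $\covmat_{\ybar[k]|C}$ plus $\Cov{\nc|C}$ plus the two cross-covariance terms between $\ybar[k]$ and $\nc(\ybar[k])$. The main obstacle is showing that the cross terms vanish.

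We would handle this with the tower property, conditioning on $\ybar[k]$. The channel noise is assumed to have conditional mean that does not depend on $\ybar[k]$, which covers the zero-mean \ac{SDCN} and the \ac{SIN} case. Under that assumption,
\begin{equation*}
\E{(\ybar[k]-\muvec_{\ybar[k]|C})(\nc-\muvec_{\nc|C})\transpose|C} = \E{(\ybar[k]-\muvec_{\ybar[k]|C})\,\E{\nc-\muvec_{\nc|C}\,|\,\ybar[k]}\transpose|C} = \nullmatrix .
\end{equation*}
For \ac{SDCN} this is immediate, since $\E{\sqrt{\nuc\ybar}\hadamard\nbase|\ybar}=\nullmatrix$ because $\nbase$ is independent and zero-mean. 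If the noise has a signal-dependent conditional mean, the cross term is nonzero. In that case we would either state the orthogonality as a modeling assumption or absorb the cross term into the \ac{UT} by propagating $\ybar[k]+\g{\ybar[k],\nbase}$ jointly.

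With the cross terms removed, the remaining $\Cov{\nc|C}$ is again approximated by $\CovUT{\nc|\muvec_{\ybar[k]|C},\covmat_{\ybar[k]|C}}$ using the same stacked sigma points. This gives \eqref{eq:detection:moments:y:general:C}.

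Finally, the derivation uses only the moments of $\ybar[k]$ under $C$ and never the specific form of $C$. The same argument therefore covers both $C=\s[k]$ and $C=\mathcal{S}(s[k])$.
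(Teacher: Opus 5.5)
Your proposal is correct and follows the paper's route. You split $\y[k]=\ybar[k]+\nc(\ybar[k])$ by linearity, take the moments of $\ybar[k]$ from Proposition~\ref{prop:detection:moments:yprime:sequence} or~\ref{prop:detection:moments:yprime:all_sequences}, and replace the unavailable exact noise moments by their UT estimates, justified because $\nc$ depends on $C$ only through $\ybar[k]$. Your covariance step is more careful than the paper's, which just calls it ``analogous'' and thereby silently drops the cross-covariance between $\ybar[k]$ and $\nc(\ybar[k])$; your tower-property argument shows these terms vanish for SIN and SDCN, and you correctly flag them as an additional approximation when the noise has a signal-dependent conditional mean.
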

\begin{proof}[Proof for \eqref{eq:detection:moments:y:general:mu}]Exploiting the linearity of the expectation operator and~\eqref{eq:system_model:channel_noise} yields
\begin{align*}
    \muvec_{\y[k]|C} &= \E{\ybar[k]|C} + \E{\nc(\ybar[k])|C} \\
    &= \muvec_{\ybar[k]|C} + \E{\nc(\ybar[k])|C}.
\end{align*}
Then, we exploit that $\E{\nc(\ybar[k])|C} \approx \EUT{\nc|\muvec_{\ybar[k]|C},\covmat_{\ybar[k]|C}}$ because $\nc$ depends only on $\ybar[k]$ which in turn depends only on $C$ but there is no direct dependency of $\nc$ on $C$. Here, $\muvec_{\ybar[k]|C}$ and $\covmat_{\ybar[k]|C}$ are obtained from Proposition~\ref{prop:detection:moments:yprime:sequence} or Proposition~\ref{prop:detection:moments:yprime:all_sequences}.
\end{proof}
\begin{proof}[Proof for \eqref{eq:detection:moments:y:general:C}]The proof is analogous to the proof for \eqref{eq:detection:moments:y:general:mu}.
\end{proof}

\begin{corollary}[\ac{SIN}]\label{cor:detection:moments:y:sin}
For \ac{SIN}, we obtain the exact mean and covariance of $\y[k]$ as follows:
    \begin{subequations}
        \begin{equation}\label{eq:detection:moments:y:sin:mu}
            \muvec_{\y[k]|C}^{\SIN} = \muvec_{\ybar[k]|C}^{\SIN} + \muvec_{\nc}^{\SIN}
        \end{equation}
        \begin{equation}\label{eq:detection:moments:y:sin:C}
            \covmat_{\ybar[k]|C}^{\SIN} = \covmat_{\ybar[k]|C}^{\SIN} + \covmat_{\nc}^{\SIN}.
        \end{equation}
    \end{subequations}
\end{corollary}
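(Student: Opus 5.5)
Under \ac{SIN}, the channel noise $\nc$ no longer depends on $\ybar[k]$: it is a Gaussian random vector with fixed moments $\muvec_{\nc}^{\SIN}$ and $\covmat_{\nc}^{\SIN}$. So the \ac{UT} approximation used in Proposition~\ref{prop:detection:moments:y} is not needed. The plan is to go back to \eqref{eq:system_model:channel_noise}, $\y[k] = \ybar[k] + \nc$, and compute the moments exactly with the ordinary rules for sums of random vectors. This gives exact values rather than the UT-based estimates $\muvechat$, $\covmathat$. (The left-hand side of \eqref{eq:detection:moments:y:sin:C} should evidently read $\covmat_{\y[k]|C}^{\SIN}$.)

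For the mean, apply $\E{\cdot|C}$ for $C \in \{\s[k], \mathcal{S}(s[k])\}$ and use linearity, as in the proof of \eqref{eq:detection:moments:y:general:mu}. The first term is $\muvec_{\ybar[k]|C}^{\SIN}$, taken from the \ac{SIN} corollaries to Propositions~\ref{prop:detection:moments:yprime:sequence} and~\ref{prop:detection:moments:yprime:all_sequences}. For the second term, $\nc$ is signal-independent, so $\E{\nc|C} = \muvec_{\nc}^{\SIN}$ with no approximation.

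For the covariance, expand
\[
\Cov{\ybar[k] + \nc|C} = \Cov{\ybar[k]|C} + \Cov{\nc|C} + \E{(\ybar[k]-\muvec_{\ybar[k]|C})(\nc-\muvec_{\nc}^{\SIN})\transpose|C} + (\cdot)\transpose .
\]
The main step is to show that the cross-covariance terms vanish. To do this, use the modelling assumption behind \ac{SIN}: the channel noise represents unknown background interference that is independent of the transmitted signal, and therefore also independent of the \ac{TX} noise realizations and of the symbols that make up $\ybar[k]$. Given $C$, $\ybar[k]$ is a function of the symbols and \ac{TX} noise only, so it is conditionally independent of $\nc$ and the cross terms are zero. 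The remaining terms are $\covmat_{\ybar[k]|C}^{\SIN}$ and $\covmat_{\nc}^{\SIN}$.

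The only real obstacle is stating this independence explicitly, since the system model lets $\nc$ depend on $\ybar[k]$ in general. I would note that under \ac{SIN} the conditional \ac{pdf} $p_{\nc|\y[k]}(\nc)$ does not depend on its argument, which is exactly the required independence. As a consistency check, the \ac{UT} in Remark~2 reproduces these values exactly when $\g{\x,\nbase}$ is affine in $\nbase$ and constant in $\x$, so the corollary also follows as a special case of Proposition~\ref{prop:detection:moments:y}.
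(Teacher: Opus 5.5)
Your proof is correct and follows essentially the paper's argument. The paper substitutes $\E{\nc(\ybar[k])|C} = \muvec_{\nc}^{\SIN}$, and analogously $\covmat_{\nc}^{\SIN}$, into \eqref{eq:detection:moments:y:general:mu} and \eqref{eq:detection:moments:y:general:C}, which is your closing special-case remark; your direct expansion of $\Cov{\ybar[k]+\nc|C}$ additionally makes explicit the vanishing cross-covariance that the paper's ``analogously'' relies on for exactness, and you are right that the left-hand side of \eqref{eq:detection:moments:y:sin:C} should read $\covmat_{\y[k]|C}^{\SIN}$.
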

\begin{proof}
    To obtain \eqref{eq:detection:moments:y:sin:mu}, we insert $\E{\nc(\ybar[k])|C} = \muvec_{\nc}^{\SIN}$ into~\eqref{eq:detection:moments:y:general:mu}. Analogously, we obtain~\eqref{eq:detection:moments:y:sin:C}. 
\end{proof}

\begin{corollary}[\ac{SDCN}] For \ac{SDCN}, we obtain the mean of $\y[k]$ exactly and approximate the covariance using the \ac{UT}:
    \begin{subequations}
        \begin{equation}\label{eq:detection:moments:y:sdcn:mu}
            \muvec_{\y[k]|C}^{\SDCN} = \muvec_{\ybar[k]|C}^{\SIN}
        \end{equation}
        \begin{equation}\label{eq:detection:moments:y:sdcn:C}
            \covmathat_{\y[k]|C}^{\SDCN} = \covmat_{\ybar[k]|C}^{\SIN} + \CovUT{\nc|\muvec_{\ybar[k]|C}, \covmat_{\ybar[k]|C}}.
        \end{equation}
    \end{subequations}
\end{corollary}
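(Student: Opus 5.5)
The plan is to specialize Proposition~\ref{prop:detection:moments:y} to the SDCN model \eqref{eq:system_model:sdcn_definition}, treating the mean and the covariance separately. Under SDCN the TX noise is still signal-independent and Gaussian, so Corollary~\ref{sec:detection:moments:yprime:sin_sdcn} and the preceding SIN/SDCN corollary give $\muvec_{\ybar[k]|C}=\muvec_{\ybar[k]|C}^{\SIN}$ and $\covmat_{\ybar[k]|C}=\covmat_{\ybar[k]|C}^{\SIN}$ for both choices $C\in\{\s[k],\mathcal{S}(s[k])\}$. Only the channel-noise terms then need attention.

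For the mean, I will use the exact identity $\muvec_{\y[k]|C}=\muvec_{\ybar[k]|C}+\E{\nc(\ybar[k])|C}$, which appears as the first step of the proof of \eqref{eq:detection:moments:y:general:mu}. The UT step is not needed here, because the expectation can be computed exactly by the tower property. Conditioning on $\ybar[k]$ gives $\E{\nc|\ybar[k]}=\sqrt{\nuc\ybar[k]}\hadamard\E{\nbase|\ybar[k]}$. The base noise $\nbase$ is zero-mean and is generated independently of $\ybar[k]$, so $\E{\nbase|\ybar[k]}=\nullmatrix$. Hence $\E{\nc(\ybar[k])|C}=\E{\E{\nc|\ybar[k]}|C}=\nullmatrix$. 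This yields \eqref{eq:detection:moments:y:sdcn:mu} exactly, with no approximation, for any distribution of $\ybar[k]$.

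For the covariance, I will start from $\Cov{\ybar[k]+\nc|C}=\covmat_{\ybar[k]|C}+\Cov{\nc|C}+\Cov{\ybar[k],\nc|C}+\Cov{\nc,\ybar[k]|C}$. The cross-covariance vanishes: using $\E{\nc|C}=\nullmatrix$ from the previous step, $\E{(\ybar[k]-\muvec_{\ybar[k]|C})\nc\transpose|C}=\E{(\ybar[k]-\muvec_{\ybar[k]|C})\,\E{\nc|\ybar[k]}\transpose|C}=\nullmatrix$. What remains is $\covmat_{\ybar[k]|C}^{\SIN}+\Cov{\nc|C}$, which matches \eqref{eq:detection:moments:y:general:C}.

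The last term is where the approximation comes in, and it is the only delicate point. Exactly, $\Cov{\nc|C}=\nuc\,\mathrm{diag}(\E{\ybar[k]|C})$, at least while $\ybar[k]$ stays nonnegative. However, the paper defines the estimator through the UT, so I will write $\nc=\g{\ybar[k],\nbase}$ with $\g{\ybar,\nbase}=\sqrt{\nuc\ybar}\hadamard\nbase$. I will then invoke the remark in Section~\ref{sec:detection:ut} on the augmented vector $[\ybar[k],\nbase]\transpose$, using mean $[\muvec_{\ybar[k]|C},\nullmatrix]\transpose$ and block-diagonal covariance. This gives $\CovUT{\nc|\muvec_{\ybar[k]|C},\covmat_{\ybar[k]|C}}$ and hence \eqref{eq:detection:moments:y:sdcn:C}. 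The approximation enters only because the UT uses the first two moments of $\ybar[k]$ rather than its full pdf, as the footnote to Proposition~\ref{prop:detection:moments:y} notes. Sigma points with negative components would need the same clipping convention the paper already uses in its simulations.
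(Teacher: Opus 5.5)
Your proposal is correct and follows the paper's route. The mean identity rests on $\nbase$ being zero-mean and independent of $\ybar[k]$ (you use the tower property where the paper uses $\E{XY}=\E{X}\E{Y}$), and the covariance comes from inserting the UT estimate of $\Cov{\nc|C}$ into the decomposition of Proposition~\ref{prop:detection:moments:y}. Your explicit check that the cross-covariance between $\ybar[k]$ and $\nc$ vanishes is a welcome addition, since the paper's ``analogous'' argument leaves that step implicit.
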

\begin{proof}
    To obtain~\eqref{eq:detection:moments:y:sdcn:mu}, we observe that \eqref{eq:system_model:sdcn_definition} is zero-mean according to 
    \begin{align*}
        \E{\left[\sqrt{\nuc \y[k]}\right]_i \cdot[ \nbase]_i} &= \E{\left[\sqrt{\nuc \y[k]}\right]_i} \cdot \E{\left[ \nbase\right]_i} \\
        &= \E{\left[\sqrt{\nuc \y[k]}\right]_i} \cdot 0 = 0,
    \end{align*}
    where $i \in \{1, \dots, \nspecies \}$ and we exploited that the expectation of the product of two independent \acp{RV} is the product of their expectations.
    Eq.~\eqref{eq:detection:moments:y:sdcn:C} is obtained analogously to the proof for \eqref{eq:detection:moments:y:general:mu}. 
\end{proof}

\begin{proposition}[Moments of $\z$]\label{prop:detection:moments:z}Finally, we approximate the first- and second-order moments of $\z[k]$ for a given condition $C \in \{\s[k], \mathcal{S}(s[k])\}$ using the \ac{UT} as follows:
    \begin{subequations}
        \begin{equation}\label{eq:detection:moments:z:general:mu}
            \muvechat_{\z[k]|C} = \muvechat_{\f{\y[k]}|C} + \EUT{\nrx|\muvechat_{\f{\y[k]}|C}, \covmathat_{\f{\y[k]}|C}}
        \end{equation}
        \begin{equation}\label{eq:detection:moments:z:general:C}
            \covmathat_{\z[k]|C} = \covmathat_{\f{\y[k]}|C} + \CovUT{\nrx|\muvechat_{\f{\y[k]}|C}, \covmathat_{\f{\y[k]}|C}}.
        \end{equation}
    \end{subequations}
\end{proposition}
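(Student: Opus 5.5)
We follow the same template as for Proposition~\ref{prop:detection:moments:y}, now applied to the last equation \eqref{eq:system_model:rx_noise}. The argument has two stages. In the first stage, we use the \ac{UT} to propagate the moments of $\y[k]$ through $\f\cdot$. In the second stage, we add the contribution of the signal-dependent \ac{RX} noise, estimated with the \ac{UT} as in the remark of Section~\ref{sec:detection:ut}.

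\emph{Stage 1.} For a condition $C \in \{\s[k], \mathcal{S}(s[k])\}$, Proposition~\ref{prop:detection:moments:y} (or Corollary~\ref{cor:detection:moments:y:sin} for \ac{SIN}, and the \ac{SDCN} corollary) provides $\muvechat_{\y[k]|C}$ and $\covmathat_{\y[k]|C}$. We feed these into the \ac{UT} of Section~\ref{sec:detection:ut}. Concretely, we build $2\nspecies$ sigma points from a Cholesky factor of $\nspecies\covmathat_{\y[k]|C}$ and push them through $\f\cdot$. This yields
\begin{equation*}
\muvechat_{\f{\y[k]}|C} = \EUT{\f{\y}|\muvechat_{\y[k]|C},\covmathat_{\y[k]|C}}, \qquad \covmathat_{\f{\y[k]}|C} = \CovUT{\f{\y}|\muvechat_{\y[k]|C},\covmathat_{\y[k]|C}}.
\end{equation*}
These are only approximations, for two reasons: the \ac{UT} is approximate for non-linear $\f\cdot$, and its inputs are themselves approximate moments.

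\emph{Stage 2.} We take the expectation of \eqref{eq:system_model:rx_noise} and use linearity:
\begin{equation*}
\muvec_{\z[k]|C} = \E{\f{\y[k]}|C} + \E{\nrx(\f{\y[k]})|C}.
\end{equation*}
The first term is replaced by $\muvechat_{\f{\y[k]}|C}$. For the second term we argue as in the proof of \eqref{eq:detection:moments:y:general:mu}: $\nrx$ depends on $C$ only through $\f{\y[k]}$. We therefore write $\nrx(\f{\y}) = \g{\f{\y},\nbase}$, stack $[\f{\y},\nbase]\transpose$ with block-diagonal covariance, and apply the \ac{UT}. This gives the term $\EUT{\nrx|\muvechat_{\f{\y[k]}|C},\covmathat_{\f{\y[k]}|C}}$, which establishes \eqref{eq:detection:moments:z:general:mu}.

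For \eqref{eq:detection:moments:z:general:C}, we write
\begin{equation*}
\Cov{\z[k]|C} = \Cov{\f{\y[k]}|C} + \Cov{\nrx|C} + \text{cross terms}.
\end{equation*}
The main obstacle is justifying that the cross terms vanish, since the \ac{RX} noise is allowed to be signal-dependent. We would first argue that the cross-covariance is zero whenever, conditioned on $\f{\y[k]}$, the noise $\nrx$ has a mean that does not depend on $\f{\y[k]}$. In that case the law of total covariance gives $\E{(\f{\y}-\muvec)\,\E{\nrx|\f{\y}}\transpose} = \nullmatrix$ once $\E{\nrx|\f{\y}}$ is constant. This holds for zero-mean multiplicative forms like \eqref{eq:system_model:sdcn_definition} and for signal-independent Gaussian \ac{RX} noise (\ac{SIN}/\ac{SDCN}), and follows by the same independence argument used in the \ac{SDCN} corollary. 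In the fully general case we would state the dropped cross term explicitly as part of the approximation, as was done implicitly in Proposition~\ref{prop:detection:moments:y}.

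The remaining term $\Cov{\nrx|C}$ is then estimated by $\CovUT{\nrx|\cdot,\cdot}$ as in Stage 2.
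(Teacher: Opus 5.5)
Your proposal is correct and follows the paper's proof: you first push the (approximate) moments of $\y[k]$ through $\f\cdot$ with the UT, then add the RX-noise moments estimated via the UT, exactly as in the proof of \eqref{eq:detection:moments:y:general:mu}. You also handle the cross-covariance between $\f{\y[k]}$ and $\nrx$ explicitly: it vanishes whenever the conditional mean of the noise does not depend on $\f{\y[k]}$, and otherwise it is absorbed into the approximation. This is a useful clarification of a step the paper leaves implicit in the word ``analogously.''
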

\begin{proof}
    For both $\muvechat_{\z[k]|C}$ and $\covmathat_{\z[k]|C}$, we use the same steps: First, we estimate $\muvechat_{\f{\y[k]}|C}$ and $\covmathat_{\f{\y[k]}|C}$ using the \ac{UT} with the (approximate) mean and covariance from Proposition~\ref{prop:detection:moments:y} or one of the relevant corollaries as input to the \ac{UT}. Then, we continue analogously to the proof of \eqref{eq:detection:moments:y:general:mu} for the final expression.
\end{proof}

\begin{corollary}[\ac{SIN}, \ac{SDCN}]For signal-independent \ac{RX} noise, these expressions simplify to:
    \begin{subequations}
            \begin{equation}\label{eq:detection:moments:z:sinsdcn:mu}
                \muvechat_{\z[k]|C} = \muvechat_{\f{\y}|C} + \muvec_{\nrx}^{\SIN}
            \end{equation}
            \begin{equation}\label{eq:detection:moments:z:sinsdcn:C}
                \covmathat_{\z[k]|C} = \covmathat_{\f{\y}|C} + \covmat_{\nrx}^{\SIN}.
            \end{equation}
        \end{subequations}
\end{corollary}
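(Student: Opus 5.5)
The plan is to specialize Proposition~\ref{prop:detection:moments:z} to the case where the \ac{RX} noise does not depend on the sensor output. In both the \ac{SIN} and the \ac{SDCN} scenario, $\nrx$ is Gaussian with fixed mean $\muvec_{\nrx}^{\SIN}$ and covariance $\covmat_{\nrx}^{\SIN}$. It is also independent of $\f{\y[k]}$, and hence independent of the condition $C \in \{\s[k], \mathcal{S}(s[k])\}$. I would therefore not pass through the \ac{UT}-based noise operators at all. Instead, I would apply $\E{\cdot|C}$ and $\Cov{\cdot|C}$ directly to \eqref{eq:system_model:rx_noise}.

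\textbf{Mean.} By linearity of expectation,
\begin{equation*}
\E{\z[k]|C} = \E{\f{\y[k]}|C} + \E{\nrx|C}.
\end{equation*}
Signal independence gives $\E{\nrx|C} = \muvec_{\nrx}^{\SIN}$. The first term is approximated exactly as in the proof of Proposition~\ref{prop:detection:moments:z}, namely by the \ac{UT} estimate $\muvechat_{\f{\y}|C}$, using the moments of $\y[k]$ from Proposition~\ref{prop:detection:moments:y} or its corollaries as input. This yields \eqref{eq:detection:moments:z:sinsdcn:mu}.

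\textbf{Covariance.} Expanding the covariance of a sum gives
\begin{equation*}
\Cov{\z[k]|C} = \Cov{\f{\y[k]}|C} + \Cov{\nrx|C} + \mathbf{K} + \mathbf{K}\transpose,
\end{equation*}
where $\mathbf{K}$ denotes the cross-covariance between $\f{\y[k]}$ and $\nrx$. This is the only step needing care: I must argue that $\mathbf{K} = \nullmatrix$. It follows because $\nrx$ is generated independently of $\y[k]$ (it is signal-independent), so the cross-covariance vanishes. Then $\Cov{\nrx|C} = \covmat_{\nrx}^{\SIN}$, and the first term is replaced by the \ac{UT} estimate $\covmathat_{\f{\y}|C}$, giving \eqref{eq:detection:moments:z:sinsdcn:C}.

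\textbf{Consistency check.} Alternatively, one can verify the result by plugging into \eqref{eq:detection:moments:z:general:mu}--\eqref{eq:detection:moments:z:general:C}. Writing $\nrx = \muvec_{\nrx}^{\SIN} + (\covmat_{\nrx}^{\SIN})^{1/2}\nbase$, the corresponding function $\g{\cdot,\cdot}$ is affine and does not depend on its first argument. The \ac{UT} with symmetric sigma points reproduces the mean and covariance of such an affine map exactly. Hence $\EUT{\nrx|\cdot,\cdot} = \muvec_{\nrx}^{\SIN}$ and $\CovUT{\nrx|\cdot,\cdot} = \covmat_{\nrx}^{\SIN}$, consistent with the direct argument.
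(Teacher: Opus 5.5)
Your proposal is correct and takes essentially the paper's route: the paper reuses the additive decomposition behind Proposition~\ref{prop:detection:moments:z} and inserts the known signal-independent moments $\muvec_{\nrx}^{\SIN}$ and $\covmat_{\nrx}^{\SIN}$ in place of the UT noise estimates, exactly as in Corollary~\ref{cor:detection:moments:y:sin}. Your version is somewhat more careful, since you explicitly show that the cross-covariance between $\f{\y[k]}$ and $\nrx$ vanishes and that the UT is exact for the affine noise map, both of which the paper leaves implicit.
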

\begin{proof}
    We use the general formulation for the moments of $\z[k]$ and then continue analogously to the proofs for Corollary~\ref{cor:detection:moments:y:sin}.
\end{proof}

\scaleSubsection\subsection{Obtaining the Final Moments}\scaleSubsectionBelow\label{sec:detection:moments:combination}
Finally, we show how the different propositions, and their respective corollaries, can be combined to obtain the first- and second-order moments utilized by the various proposed detectors. For the \ac{ISI}-unwaware detector, we exploit Propositions~\ref{prop:detection:moments:x}, \ref{prop:detection:moments:yprime:sequence}, \ref{prop:detection:moments:y}, and \ref{prop:detection:moments:z}, while setting $\kappamax=0$. For the \ac{LC} detector, we exploit Propositions~\ref{prop:detection:moments:x}, \ref{prop:detection:moments:yprime:all_sequences}, \ref{prop:detection:moments:y}, and \ref{prop:detection:moments:z} using a $\kappamax$ appropriate for the channel. The sequence detector uses the same propositions like the \ac{LC} detector, only Proposition~\ref{prop:detection:moments:yprime:all_sequences} is replaced by Proposition~\ref{prop:detection:moments:yprime:sequence}. 

%% file: sections/mixture.tex
\scaleSection\section{Moment-based Mixture Alphabet Design}\scaleSectionBelow\label{sec:mixture}
In the previous section, we introduced several detectors for a given mixture alphabet. Now, in Section~\ref{sec:mixture:isi_unaware}, we develop a systematic approach to generate such alphabets, before introducing an adaptive transmission scheme in Section~\ref{sec:mixture:adaptive}.

\scaleSubsection\subsection{Static Mixture Alphabet Design}\scaleSubsectionBelow\label{sec:mixture:isi_unaware}
Existing schemes for designing signal constellations usually made idealizing assumptions about the \ac{RX} characteristics, e.g., the ability to count individual molecules~\cite{wang:effective_constellation_design_CSK}, or consider only a specific type of \ac{RX}~\cite{jamali:olfaction_inspired_MC}. In contrast, here, we propose an algorithm for mixture design that is applicable to general \acp{RX} with non-linear, cross-reactive, and noisy sensing units. 
In this general setting, the identification of \textit{optimal} alphabets, i.e., alphabets that achieve minimal \acp{SER}, is usually infeasible. Thus, we instead propose a suboptimal, greedy approach, summarized in Algorithm~\ref{alg:mixture:isi_unaware}, that optimizes the pair-wise separability of the symbols in the \ac{SOD}. While the algorithm has been developed for \ac{ISI}-free scenarios, it can also be applied in \ac{ISI} scenarios, particularly in combination with the sequence detector introduced in Section~\ref{sec:detection:sequence}\footnote{Designing a mixture alphabet explicitly for \ac{ISI} scenarios is challenging because all symbols are indirectly coupled through their contribution to the \ac{ISI}. However, the proposed algorithm finds a mixture alphabet that is also suitable for \ac{ISI} channels.}.

\begin{algorithm}
\caption{Static Mixture Alphabet Design Algorithm\label{alg:mixture:isi_unaware}}
\begin{algorithmic}[1]
\State \textbf{Initialization}
\State \text{Initialize symbol alphabet:} $\symbolalphabet = \{\}$
\State \text{Generate initial point:} $\mathbf{i}_0 \sim \mathcal{U}(\feasibleset)$
\State \text{Generate candidate set $\mathcal{C}$:} $\mathbf{c} \sim \mathcal{U}(\feasibleset)$, $\forall \mathbf{c} \in \mathcal{C}$
\State \text{Add initial signal point:} $\symbolalphabet = \symbolalphabet \cup\{ \arg\max_{\mathbf{c} \in \mathcal{C}} d(\mathbf{c}, \mathbf{i}_0)$\}

\State \textbf{Iteration}
\While{ $|\symbolalphabet| \leq \nsymbols $ }
    \State \mbox{\text{Add signal point:} $\symbolalphabet = \symbolalphabet \cup\{ \arg\max_{\mathbf{c} \in \mathcal{C}} \;\min_{\xbar \in \symbolalphabet} d(\mathbf{c}, \xbar\}$ }
\EndWhile
\end{algorithmic}
\end{algorithm}
Our approach can be divided into an initialization stage during which the first symbol is chosen and into an iteration stage during which the remaining symbols are added one by one until $\symbolalphabet$ contains the desired number of elements, $\nsymbols$. 
During the \textbf{initialization stage}, we choose an initial point $\mathbf{i}_0$ that is drawn from the uniform distribution over the feasible set $\feasibleset$ (line~3) and we create a candidate set $\mathcal{C}$ whose $|\mathcal{C}|$ elements are uniformly distributed in $\feasibleset$ (line~4). Then, we select the candidate with the largest distance to $\mathbf{i}_0$ as measured by some distance metric $d(\cdot, \cdot)$, which we will introduce below, as the first symbol (line~5). This procedure ensures that the first symbol will usually lie somewhere close to the boundary of $\feasibleset$ and not in the middle of $\feasibleset$, which could impair the overall quality of the mixture alphabet. 
Then, during the \textbf{iteration stage} (lines~6-9), we add one new symbol per iteration until the full mixture alphabet is constructed. In each iteration, the candidate with the largest minimum distance to any existing symbol in $\symbolalphabet$ is selected.
The quality of the constructed mixture alphabet is determined by the size of the candidate set $|\mathcal{C}|$ and the chosen distance metric. Thus, we introduce two distance metrics below.

Both metrics utilize the approximate first- and second-order moments of $\z[k]$ that are computed using Propositions~\ref{prop:detection:moments:x}, \ref{prop:detection:moments:yprime:all_sequences}, \ref{prop:detection:moments:y}, and \ref{prop:detection:moments:z} while setting $\kappamax=0$, i.e., analogous to the \ac{ISI}-unaware detector.  
An obvious choice for $d(\cdot,\cdot)$ is the \textbf{Euclidean distance} between the mean values of the two mixtures $\xbar_i$ and $\xbar_j$ in the \ac{SOD}, i.e.,
\vspace*{-2mm}
 \begin{equation}
    \deuclidean(\xbar_i, \xbar_j) = \lVert \muvechat_{\z|\xbar_i} - \muvechat_{\z|\xbar_j} \rVert_2.
\vspace*{-2mm}
\end{equation}
Alternatively, to account for non-uniform signal spread, we can use the distance metric incorporating the symbols' covariances that was introduced in \cite{jamali:olfaction_inspired_MC} as a \textbf{generalization of the \ac{SNR}} to mixture communications. Specifically, this metric increases with the Euclidean distance between the mean values of the two symbols, but it decreases with increasing variance along the shortest path between the mean values. Formally, the metric is given by
\begin{equation}\label{sec:mixture_design:metrics:vahid}
    \dvahid(\xbar_i, \xbar_j) = \frac{ \lVert \muvechat_{\z|\xbar_i} - \muvechat_{\z_j|\xbar} \rVert_2^2 }{ \mathbf{p}\transpose\covmathat_{\z|\xbar_i}\mathbf{p} + \mathbf{p}\transpose\covmathat_{\z|\xbar_j}\mathbf{p} }
\end{equation}
where $\mathbf{p} = \frac{\muvechat_{\z|\xbar_i} - \muvechat_{\z|\xbar_j}}{\lVert \muvechat_{\z|\xbar_i} - \muvechat_{\z|\xbar_j} \rVert_2}$. While computationally more complex, $\dvahid(\cdot, \cdot)$ promises a higher performance compared to $\deuclidean(\cdot, \cdot)$ due to the consideration of the covariances of both symbols. 

\scaleSubsection\subsection{Adaptive Alphabet Design - Chemical Precoding}\scaleSubsectionBelow\label{sec:mixture:adaptive}
In addition to the algorithm for the design of static mixture alphabets, we also propose an adaptive transmission scheme where the \ac{TX} determines in each interval which concentration vector $\xbar[k]$ to transmit. This scheme assumes that the \ac{RX} employs an \ac{LC} detector trained on a static mixture alphabet $\symbolalphabet$ generated by Algorithm~\ref{alg:mixture:isi_unaware}. Then, the \ac{RX} chooses the mixture $\xbar[k]$ so that the resulting mixture at the \ac{RX} is close to $\muvec_{\y[k]|\mathcal{S}(s[k])}$ for the desired symbol $s[k]$. This corresponds to a form of precoding known in wireless communication.

\begin{algorithm}
\caption{Adaptive Transmission Scheme\label{alg:mixture:adaptive}}
\begin{algorithmic}[1]
\State \textbf{Initialization:}
\State \text{Generate candidate set $\candidateset$: } $\c \sim \mathcal{U}(\feasibleset)$, $\forall \c \in \candidateset$
\State \text{Compute} $\muvec_{\x|\c}$ using Proposition~\ref{prop:detection:moments:x} $\forall \c \in \candidateset$

\State \textbf{Symbol Transmission:}
\For{$k \in \mathbb{N}_{\geq0}$}
    \State \text{Compute ISI: } $\y^{\mathrm{ISI}}[k] = \sum_{\kappa=1}^{\kappamax} \H[\kappa] \muvec_{\x|\c[k-\kappa]}$
    \State \text{Ideal mixture: } $\x^{\mathrm{ideal}}[k] = \H^{-1}[0] \cdot \mathrm{proj}_{\feasibleset}\left(\muvec_{\ybar|\mathcal{S}(s[k])} - \y^{\mathrm{ISI}}[k]\right)$
    \State \text{Select best candidate: } $\c[k] = \arg\min_{\c \in \candidateset} \lVert \muvec_{\x|\c} - \x^{\mathrm{ideal}}[k] \rVert_2$
\EndFor
\end{algorithmic}
\end{algorithm}
Our approach, summarized in Algorithm~\ref{alg:mixture:adaptive}, can be divided into an initialization stage and into the actual symbol transmission. 
During the \textbf{initialization stage}, we create a candidate set $\mathcal{C}$ whose $|\mathcal{C}|$ elements are uniformly distributed in $\feasibleset$ (line 2), and then we compute for each candidate, $\c \in \mathcal{C}$, $\muvec_{\x|\c}$ using Proposition~\ref{prop:detection:moments:x} (line 3). 
During \textbf{symbol transmission}, we adaptively compute the expected \ac{ISI} caused by the previously transmitted mixtures (line 6) according to 
\begin{equation}
    \y^{\mathrm{ISI}}[k] = \sum_{\kappa=1}^{\kappamax} \H[\kappa] \muvec_{\x|\c[k-\kappa]},
\end{equation}
where $\c[k-\kappa]$ denotes the mixture released $\kappa$ intervals ago.
Then, in line 7, we identify which mixture should be transmitted over the channel so that the resulting mixture at the \ac{RX} is closest to the mixture lying in the center of the decision region for $s[k]$ according to 
\begin{equation}
    \x^{\mathrm{ideal}}[k] = \H^{-1}[0] \cdot \mathrm{proj}_{\feasibleset}\left(\muvec_{\ybar|\mathcal{S}(s[k])} - \y^{\mathrm{ISI}}[k]\right),
\end{equation}
where $\mathrm{proj}_{\feasibleset}(\cdot)$ projects a point onto $\feasibleset$. This is necessary to account for the possibility that a non-realizable concentration would be ideal. Then, we instead select the candidate $\c$ whose $\muvec_{\x|\c}$ is closest to $\x^{\mathrm{ideal}}[k]$ (line 8). 
It should be noted that the proposed algorithm is computationally very inexpensive as we do not propagate mixtures through $\f\cdot$ and the \textit{nearest-neighbor search} in line 8 can be efficiently implemented, e.g., via k-d trees~\cite{friedman:algorithm_finding_best_matches,wang:efficient_nearest_neighbor_search_using_dp}.

%% file: sections/evaluation.tex
\scaleSection\section{Performance Evaluation}\scaleSectionBelow\label{sec:evaluation}
In this section, we present simulation results for the different detectors proposed in Section~\ref{sec:detection}, the mixture alphabet design algorithm proposed in Section~\ref{sec:mixture:isi_unaware}, and the adaptive transmission scheme proposed in Section~\ref{sec:mixture:adaptive}. 

\scaleSubsection\subsection{Simulation Setup}\scaleSubsectionBelow\label{sec:evaluation:setup}
We provide simulations for both the \ac{SIN} and \ac{SDCN} scenarios in conjunction with the non-linear model of an array of \ac{MOS} sensors. We parametrize the model for \ac{MOS} sensors in two ways: First, to demonstrate the necessity of explicitly accounting for the non-linear behavior of sensors, we fit the model to measurement data for two commercially available \ac{MOS} sensors (TGS800 and TGS826) responding to mixtures of ethanol and ammonia~\cite{hirobayashi:verification_logarithmic_model}. For our further experiments, on the other hand, we generate artificial sensor data as detailed in Appendix~\ref{sec:appendix:artificial_sensor_data} to allow for a more general investigation into the importance of accounting for the \ac{RX} characteristics, e.g., to investigate how the number of employed sensors affects communication performance. Relevant simulation parameters are reported in Table~\ref{tab:params}. 
To vary the noise power, and thus the \ac{SNR}, we multiply all covariance matrices by a common factor $\nu$ and set $\nuc = \nu$. Note that we show \acp{SER} in result plots as a function $1/\nu$ analogously to showing \acp{SER} over the \ac{SNR} in wireless communications.

{
\renewcommand{\arraystretch}{1.0}
\begin{table}%
\vspace*{-10mm}
    \centering
    \caption{Default system parameters for the simulations.}
    \vspace*{-4mm}
    \def\arraystretch{1.2}
    \begin{tabular}{|l|c|c||l|c|c|} 
        \hline
        Parameter & Value (SIN) & Value (SDCN) & Parameter & Value (SIN) & Value (SDCN)\\ 
        \hline
        $\nsensors$ & \multicolumn{2}{c||}{$3$} & $\nspecies$ & \multicolumn{2}{c|}{$2$} \\ \hline
        $\H[\kappa]$ & \multicolumn{2}{c||}{$[0.01 \cdot \I \;,\; 0.0025 \cdot \I\;,\; 0.0005 \cdot \I]$} & $|\mathcal{C}|$ & \multicolumn{2}{c|}{$500$} \\ \hline

        $\muvec_{\ntx}$ & \multicolumn{2}{c||}{$\nullmatrix$} & $\covmat_{\ntx}$ & \multicolumn{2}{c|}{$10^6 \cdot \I$} \\ \hline
        $\muvec_{\nc}$ & $10 \cdot \boldsymbol{1}$ & - & $\covmat_{\nc}$ & $10 \cdot \I$ & - \\ \hline
        $\muvec_{\nrx}$ & \multicolumn{2}{c||}{$\nullmatrix$} & $\covmat_{\nrx}$ & \multicolumn{2}{c|}{$10^{-13} \cdot \I$} \\ \hline
        $\feasibleset$ (no ISI) & \multicolumn{2}{c||}{ $[0\;,\; 4 \cdot 10^4]^{\nspecies}$ } & $\feasibleset$ (ISI) & \multicolumn{2}{c|}{ $[0\;,\; 3 \cdot 10^4]^{\nspecies}$ } \\ \hline
        $\feasibleset$ (TGS) & \multicolumn{2}{c||}{ $[20 \cdot 10^3, 10 \cdot 10^4] \times [15 \cdot 10^3, 50 \cdot 10^3]$ } & \multicolumn{3}{c|}{-} \\ 
        \hline
    \end{tabular}
    \label{tab:params}
    \vspace*{-4mm}
\end{table}}

\scaleSubsection\subsection{Detector Analysis}\scaleSubsectionBelow\label{sec:evaluation:detector}
In the following, we first motivate the explicit consideration of non-linear, cross-reactive \ac{RX} behavior before evaluating the performance of our proposed detectors quantitatively. 

\subsubsection{ISI-free Scenario}\label{sec:evaluation:detector:isi_free}
\begin{figure*}[h!]
    \centering
    \includegraphics[width=\linewidth]{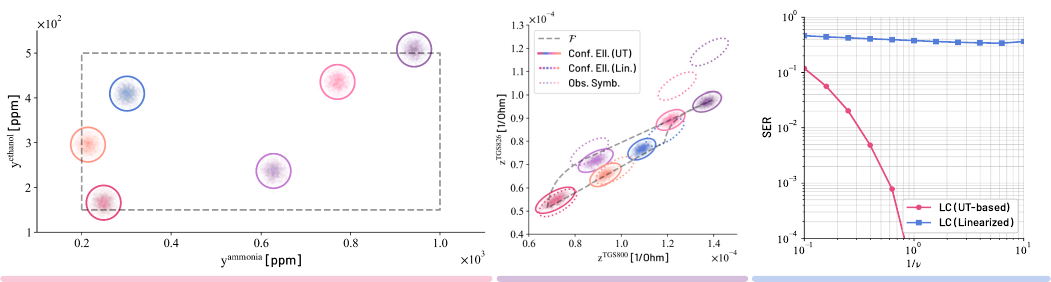}
    \vspace*{-9mm}
    \caption{\textbf{Impact of non-linear, cross-reactive sensor behavior.} The \ac{UT}-based detectors capture the likelihoods of $\y[k]$ (left) and of $\z[k]$ (center). On the other hand, the sensor responses predicted by the linearized sensor models (center, dashed lines) differ systematically from the actual sensor responses (center plot, dots). This is also reflected in the respective \acp{SER} of the \ac{UT}-based detector (right, purple) and the detector relying on a linearized sensor model (right, blue).}
    \label{fig:evaluation:detector:isi_free:non_linearity_motivation}
    \vspace*{-10mm}
\end{figure*}

Before starting the systematic evaluation of the proposed detectors, we first provide some intuition \textbf{why it is important to account for the characteristics of the \ac{RX} array}. To this end, we employ an alphabet with $\nsymbols=6$ symbols optimized for an array of the two considered TGS sensors in the \ac{SIN} scenario\footnote{Here, we use noise parameters deviating from the ones reported in Table~\ref{tab:params} to get more informative results for the chosen sensors. Specifically, we use $\muvec_{\ntx} = \nullmatrix$, $\covmat_{\ntx}$, $\muvec_{\nc} = 10 \cdot \boldsymbol{1}$, $\covmat_{\nc} = 10 \cdot \I$, and $\muvec_{\nrx} = \nullmatrix$, $\covmat_{\nrx} = 10^{-12} \cdot \I$.}. In Figure~\ref{fig:evaluation:detector:isi_free:non_linearity_motivation}, we show on the left and in the middle respectively the empirical likelihoods of $\y[k]$ and $\z[k]$ (colored dots) together with the $3\sigma$-confidence ellipses obtained from the moments used by the \ac{UT}-based detectors (solid lines). For $\z[k]$, we additionally show the $3\sigma$-confidence ellipses obtained from a linearized $\f\cdot$ (dashed lines), i.e., from the first-order two-dimensional Taylor-expansion of $\f\cdot$ around the center of the feasible set. 
Clearly, the moments based on the \ac{UT} match the empirical likelihoods very closely even for $\z[k]$. On the other hand, the moments for the linearized sensor function do not match the empirical likelihoods. This is also reflected in the \acp{SER} of the \ac{UT}-based detector and the detector relying on a linearized sensor model, as shown on the right-hand side of Figure~\ref{fig:evaluation:detector:isi_free:non_linearity_motivation}. While the \ac{SER} of the \ac{UT}-based detector is approx. $10\%$ for $1/\nu = 10^{-1}$ and decreases with increasing $1/\nu$, the detector relying on the linearized system model suffers from \acp{SER} of at least $20\%$ independent of the \ac{SNR}. 

In the following, we analyze the \textbf{quantitative performance} of the detectors proposed in Section~\ref{sec:detection} using artificial sensor arrays that are generated as described in Appendix~\ref{sec:appendix:artificial_sensor_data}. To this end, we compare our \ac{UT}-based detectors to two baselines, namely the \textit{centroid detector} and \textit{\ac{kNN} classifiers}, which we describe in the following in more detail. 

The \textbf{centroid detector} is a \ac{UT}-based detector but with $\covmathat_{\z[k]|s[k]}=\I$, $\forall s[k] \in \mathcal{S}_0$, i.e., only the mean - or centroid - of the likelihoods is considered. While this introduces some sub-optimality compared to the \ac{UT}-based detector which considers the spread of the individual symbols, it has an even lower computational complexity during symbol detection as only the computation of $\ell_2$ distances is necessary. Thus, the centroid detector may be suitable for scenarios with very strict hardware and complexity constraints at the \ac{RX}. 
Furthermore, we consider an \textbf{\ac{kNN} classifier}, which is a commonly used \textit{model-free} approach in machine learning, where new samples are classified according to the classes of the $k$ nearest samples in terms of the Euclidean distance. While this approach has the advantage of not requiring any system model, and instead uses real-world measurements (or simulations), its performance depends on the number of training samples whose generation might be costly if they are obtained from measurements. Furthermore, the complexity during the online phase also increases with the number of training samples, yielding a non-trivial accuracy-complexity trade-off. Thus, we consider \ac{kNN} classifiers trained with four samples per symbol, i.e., the same number of evaluations of $\f\cdot$ during the offline phase as for the \ac{UT}-based detector, and with 100 samples per symbol, respectively.

\begin{figure}
    \centering
    \vspace*{-10mm}
    \includegraphics[width=0.5\linewidth]{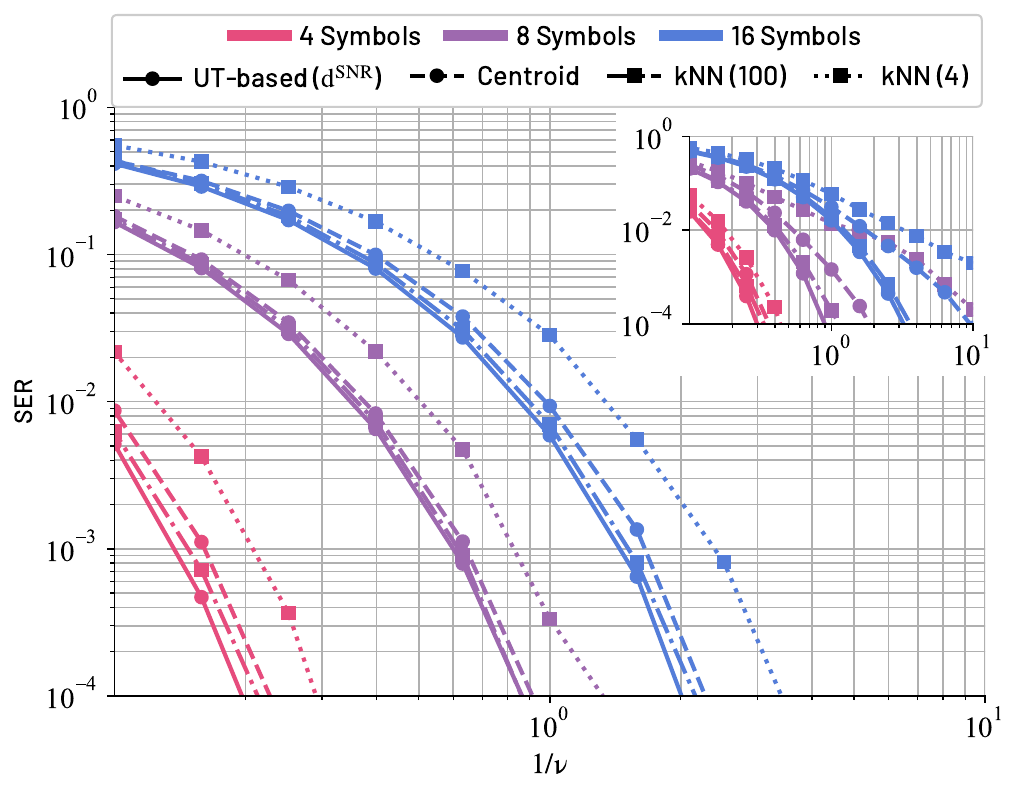}
    \vspace*{-4mm}
    \caption{\textbf{Detector Comparison for \ac{ISI}-free scenarios.} The \ac{UT}-based detector with the covariance-aware distance metric (solid) consistently achieves the lowest \acp{SER} compared to centroid detector (dashed) and \ac{kNN} detectors (dashed-dotted, dotted).}
    \label{fig:evaluation:detector:isi_free:quant}
    \vspace*{-10mm}
\end{figure}
Figure~\ref{fig:evaluation:detector:isi_free:quant} presents the \ac{SER} achieved by the \ac{UT}-based detectors and the aforementioned baselines for 4, 8, and 16 symbols as a function of $1/\nu$ for given mixture alphabets in the \ac{SIN} scenario\footnote{We show the results for the \ac{SDCN} scenario as inset to Figures~\ref{fig:evaluation:detector:isi_free:quant},~\ref{fig:evaluation:detector:isi:quant}, and~\ref{fig:evaluation:mixtures:isi_free:main}. Since similar trends can be observed for both \ac{SIN} and \ac{SDCN}, we focus our discussion on the \ac{SIN} case for brevity and highlight only the differences.}. 
First, we observe that the \ac{SER} decreases as $1/\nu$ increases for all considered detectors. Our proposed \ac{UT}-based detector achieves the lowest \ac{SER} for all $\nu$ although the \ac{kNN} detector with 100 training samples per symbol exhibits almost the same performance at the cost of requiring much more training samples. The \ac{kNN} detector with only four training samples per symbol has significantly worse performance than all other schemes, demonstrating that a model-based approach, such as the proposed \ac{UT}-based detector, can achieve significantly higher \textit{sample efficiency} compared to purely data-driven approaches, i.e., a lower number of training samples is needed to achieve a desired performance. As expected, the centroid detector has some performance loss compared to the \ac{UT}-based detector since it neglects the spread of the likelihoods. Both the centroid detector and the \ac{kNN} detector with four training samples perform worse for \ac{SDCN} than for \ac{SIN}. This is due to the non-uniform signal-spread caused by \ac{SDCN}: The centroid detector does not account for such non-uniform signal spread by design and the \ac{kNN} detector is not able to model the likelihoods appropriately with only four training samples.

In summary, Figure~\ref{fig:evaluation:detector:isi_free:quant} illustrates that the \textbf{\ac{UT}-based detector achieves a lower \ac{SER}} than the two baseline methods, confirming its applicability in settings with significant non-linear and cross-reactive \ac{RX} characteristics. At the same time, the \ac{UT}-based detector requires, in comparison to the data-driven baseline, significantly fewer training samples for a given performance.

\subsubsection{ISI Scenario}\label{sec:evaluation:detector:isi}
In the following, we investigate the performance of the proposed detectors when there is significant \ac{ISI}. 
\begin{figure}
    \centering
    \vspace*{-10mm}
    \includegraphics[width=0.5\linewidth]{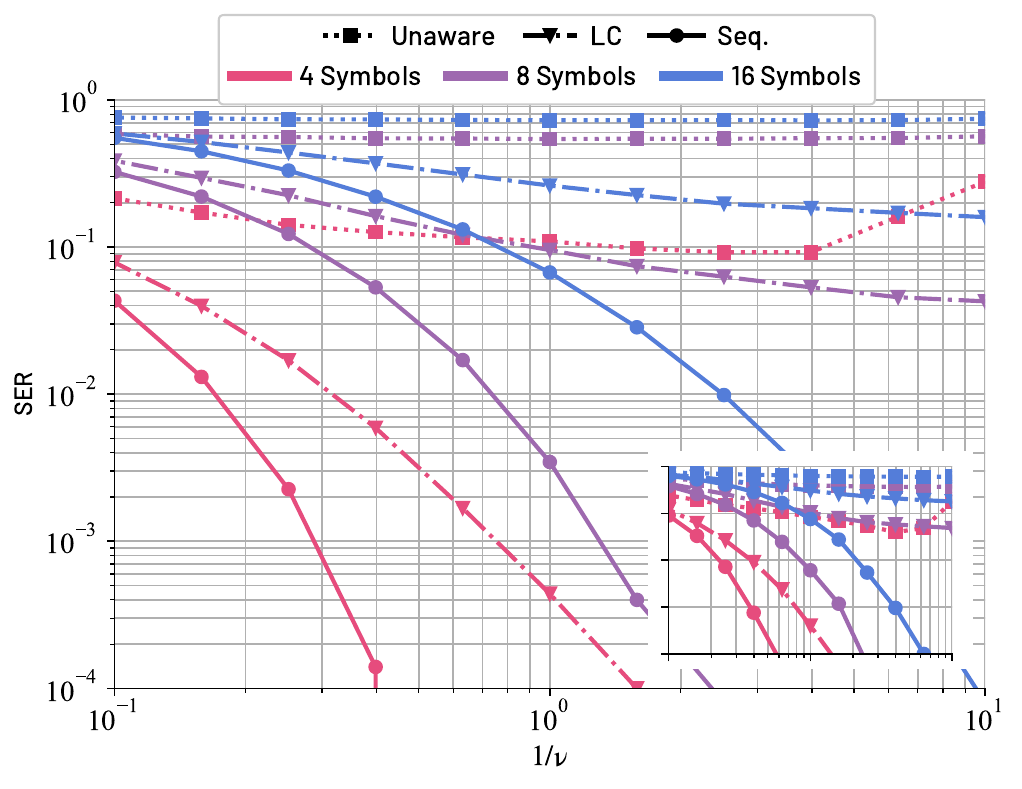}
    \vspace*{-4mm}
    \caption{\textbf{Detector Comparison for \ac{ISI} scenarios.} The \ac{ISI}-unaware detector (dashed) makes systematic decision errors, causing unacceptable \acp{SER} for all \acp{SNR} and alphabet sizes. The \ac{LC} (dashed-dotted) detector improves the performance but still cause high \acp{SER} for higher numbers of symbols, whereas the sequence detector (solid) yields fast \ac{SER} decays with increasing \ac{SNR}.}
    \label{fig:evaluation:detector:isi:quant}
    \vspace*{-10mm}
\end{figure}
Figure~\ref{fig:evaluation:detector:isi:quant} shows the \acp{SER} achieved by the proposed detectors as a function of $1/\nu$. The mixture alphabet optimized for one instance of the artificial sensor array in the \ac{SIN} scenario is adopted. %

First, we observe that the \ac{ISI}-unaware detector (dotted line) yields an unacceptably high \ac{SER} in all cases: Even when $|\mathcal{S}_0|=4$, it barely achieves an \ac{SER} below $10\%$, independent of the \ac{SNR}. This is because $\E{\y[k]}$ shifts compared to the \ac{ISI}-free case due to the additional \ac{ISI} molecules in the channel, causing a systematic mismatch between the expected and the actual sensor outputs\footnote{In the scenario with $4$ symbols, the \ac{SER} of the \ac{ISI}-unaware detector increases again for high \acp{SNR} because the \ac{SNR}-specific detector calibration leads to changing decision regions. For 8 and 16 symbols, the \ac{SER} is, independent of the \ac{SNR}, so high that the effect does not become visible.}. 
On the other hand, the \ac{LC} detector achieves lower \acp{SER} that decrease with increasing \acp{SNR}.
Finally, we observe that the \textbf{sequence detector achieves the lowest \ac{SER}} of all proposed detectors. Its relative advantage over the symbol-by-symbol detectors increases with increasing \ac{SNR} and increasing alphabet size. This matches the insights from Remark~\ref{remark:detection:moments:yprime:sin_sdcn}: The signal spread due to \ac{ISI} is usually much higher than the spread caused by noise. As the \ac{SNR} increases, the spread due to noise decreases while the spread due to \ac{ISI} remains almost constant, explaining the growing advantage of the sequence detector which can exploit information from previous intervals to reduce the uncertainty about the \ac{ISI}.

\scaleSubsection\subsection{Mixture Alphabet Design Analysis}\scaleSubsectionBelow\label{sec:evaluation:mixture}
So far, we have established that accounting for the non-linear, cross-reactive behavior of the \ac{RX} sensor array is important for effective detection. In the following, we focus on the performance of the proposed mixture alphabet design algorithm and the adaptive transmission scheme.

\subsubsection{Mixture Alphabet Design Algorithm}\label{sec:evaluation:mixture:isi_free}
\begin{figure}
    \centering
    \vspace*{-10mm}
    \includegraphics[width=0.5\linewidth]{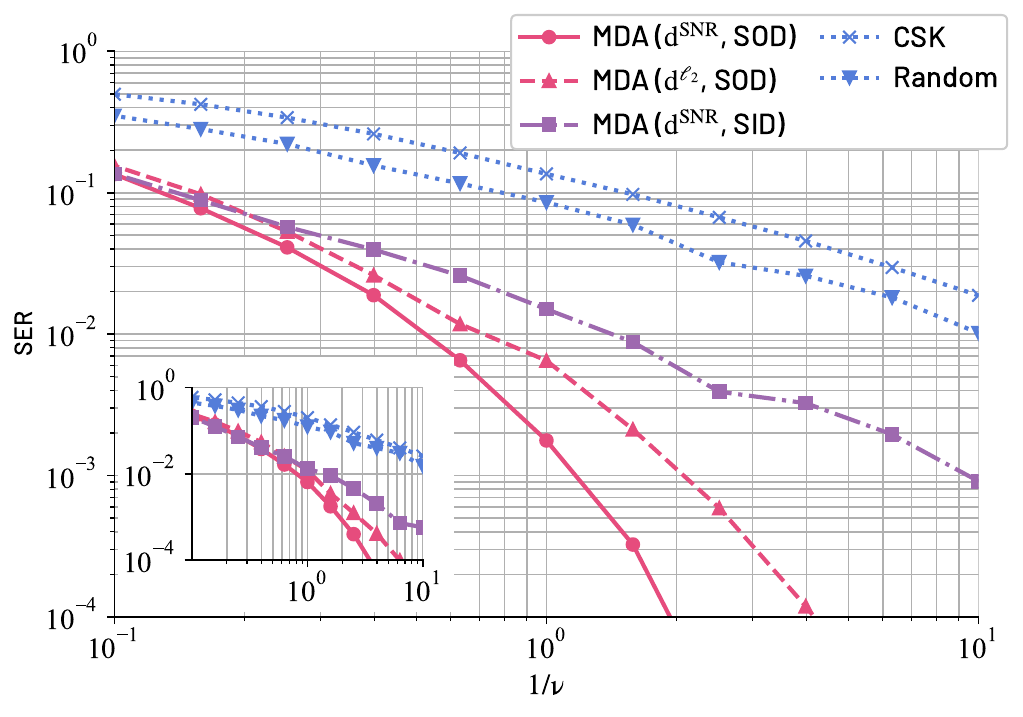}
    \vspace*{-4mm}
    \caption{\textbf{Comparison of Mixture Design Algorithms for \ac{ISI}-free scenarios.} The \ac{SER} of mixture alphabets in the \ac{SOD} with the \ac{SNR}-based distance metric (pink solid) is consistently lower than that of alphabet designs in the \ac{SID} (purple) and baselines like \ac{CSK} or random alphabets (blue).}
    \label{fig:evaluation:mixtures:isi_free:main}
    \vspace*{-10mm}
\end{figure}
In Figure~\ref{fig:evaluation:mixtures:isi_free:main}, we evaluate the mixture alphabets created by the algorithm proposed in Section~\ref{sec:mixture:isi_unaware} for both $\dvahid(\cdot,\cdot)$ and $\deuclidean(\cdot,\cdot)$. To this end, we compare the optimized designs to several baselines for $\nsymbols=8$ with regard to the \ac{SER} achieved by the \ac{UT}-based detectors as performance metric. In our simulations, we average over 100 different \ac{RX} arrays that are generated as described in Appendix~\ref{sec:appendix:artificial_sensor_data} to get generalizable insights. 

As a first baseline, we use \textbf{Algorithm~\ref{alg:mixture:isi_unaware} in the \ac{SID}}, i.e., we compute $\dvahid(\cdot, \cdot)$ for the approximate likelihoods of $\y$ instead of $\z$ and thus do not account for the non-linear, cross-reactive \ac{RX} behavior during mixture design. As a second baseline, we consider \textbf{\ac{CSK} with equally spaced constellation points} for one of the two signaling molecules in $\feasibleset$. This helps to quantify the gain that can be achieved by allowing a second dimension for the mixture design. Finally, we also consider a \textbf{random mixture design}, where the individual symbols are simply drawn independently from $\mathcal{U}(\feasibleset)$.

In Figure~\ref{fig:evaluation:mixtures:isi_free:main}, we compare the proposed \ac{MDA} to the aforementioned baselines for \ac{SIN}. As expected, our \ac{MDA} with $\dvahid(\cdot,\cdot)$ achieves the lowest \ac{SER} for a given value of $1/\nu$. Using $\deuclidean(\cdot,\cdot)$ entails slightly lower computational complexity, but is also subject to performance loss relative to $\dvahid(\cdot,\cdot)$.
Regardless of the employed distance metric, our algorithm outperforms all baselines, including the mixture design in the \ac{SID} which does not account for the non-linear, cross-reactive \ac{RX} behavior\footnote{Note that \ac{CSK} is outperformed here even by random mixture design. This is because we average over 100 \ac{RX} arrays and some of those arrays are not sensitive to the molecule type employed by \ac{CSK}. By exploiting both molecule types, random mixture design is robust against this effect.}. 
However, the advantage of the proposed algorithm compared to optimization in the \ac{SID} is less pronounced for low \acp{SNR}, which we attribute to the fact for low \acp{SNR} that the likelihoods are less accurately described by only the first- and second-order moments when feeding them through $\f\cdot$: As the spread of $\y[k]$ increases, $p_{\z[k]|s[k]}(\z[k])$ may be less accurately described by a Gaussian distribution than for a smaller spread of $\y[k]$, where $\f\cdot$ can be locally approximated as linear. Such a mismatch can lead to scenarios, where optimization in the \ac{SID} can yield similar performance to optimization in the \ac{SOD}.%

\begin{figure}
    \centering
    \vspace*{-8mm}
    \includegraphics[width=0.5\linewidth]{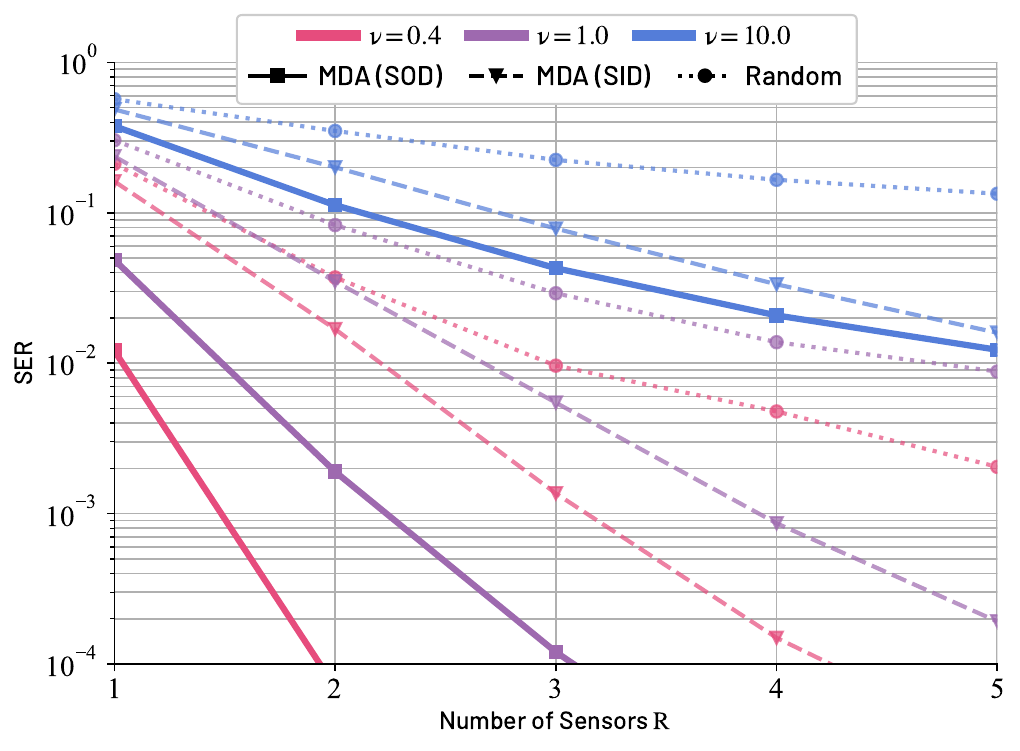}
    \vspace*{-4mm}
    \caption{\textbf{Impact of the optimization domain for different numbers of sensors.} As the number of sensors increases, the \ac{SER} decreases for all alphabet design algorithms and optimization domains. The \ac{SOD}-based optimization (solid) achieves the lowest \ac{SER} compared to \ac{SID}-based optimization (dashed) and random alphabets (dotted).}
    \label{fig:evaluation:mixtures:n_sensors}
    \vspace*{-10mm}
\end{figure}
Finally, we examine how the number of available sensors affects which optimization domain is preferable. We investigate this in Figure~\ref{fig:evaluation:mixtures:n_sensors}, where we use $\nsymbols=8$, $\nspecies=3$, and vary $\nsensors$ from 1 to 5. As before, the sensor arrays are generated as described in Appendix~\ref{sec:appendix:artificial_sensor_data}. All remaining system parameters are kept at their default values for the \ac{SIN} scenario. When comparing the \acp{SER} of the alphabets generated by the proposed \ac{MDA} in the \ac{SOD}, the proposed \ac{MDA} in the \ac{SID}, and randomly, we observe that the \ac{SER} decreases as $\nsensors$ increases for all considered alphabets. This is expected as more independent observations of $\y[k]$ become available as $\nsensors$ increases. We can observe that, in the considered scenario, optimization in the \ac{SOD} can, depending on the \ac{SNR}, achieve the same \ac{SER} as optimization in the \ac{SID} with one or two sensors less.
In summary, Figures~\ref{fig:evaluation:mixtures:isi_free:main} and~\ref{fig:evaluation:mixtures:n_sensors} shows that \textbf{\ac{RX}-aware mixture design can enable lower \acp{SER}} than simply optimizing for separable molecule concentrations.

\subsubsection{ISI Scenario}\label{sec:evaluation:mixture:isi}
We have already shown in Section~\ref{sec:evaluation:detector:isi} that the alphabets generated by the \ac{MDA} enable low \acp{SER} also for channels with \ac{ISI}, especially when combined with the sequence detector proposed in Section~\ref{sec:detection:sequence}. Now, we examine the performance of the \textbf{adaptive transmission scheme} proposed in Section~\ref{sec:mixture:adaptive} and its ability to enable low \acp{SER} when combined with the \ac{LC} symbol-by-symbol detector, thus reducing the required computational complexity at the \ac{RX}.   
\begin{figure}
\vspace*{-10mm}
     \centering
     \begin{subfigure}[b]{0.3\textwidth}
         \centering
         \includegraphics[width=\textwidth]{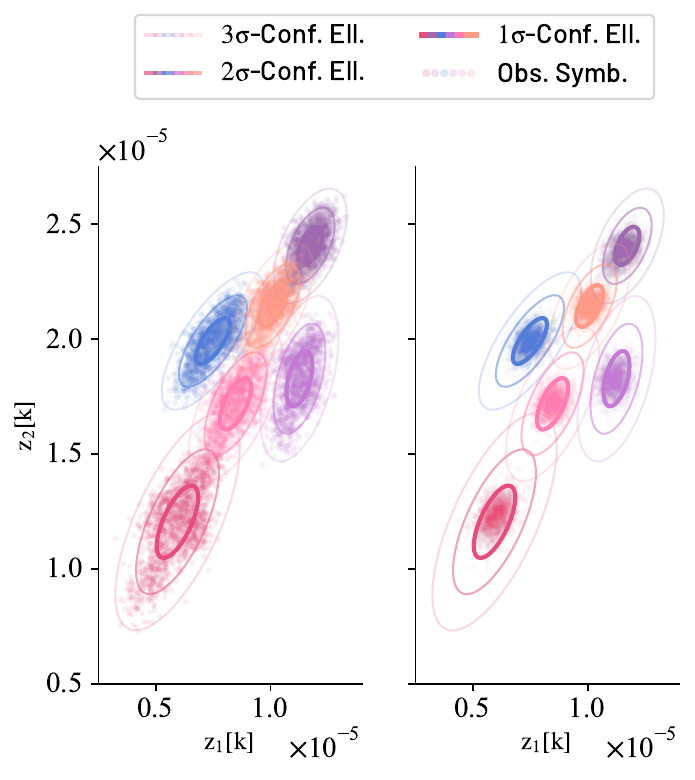}
         \vspace*{-8mm}
         \caption{}
         \label{fig:evaluation:mixtures:adaptive:motivation}
     \end{subfigure}
     \hfill
     \begin{subfigure}[b]{0.5\textwidth}
         \centering
         \includegraphics[width=\textwidth]{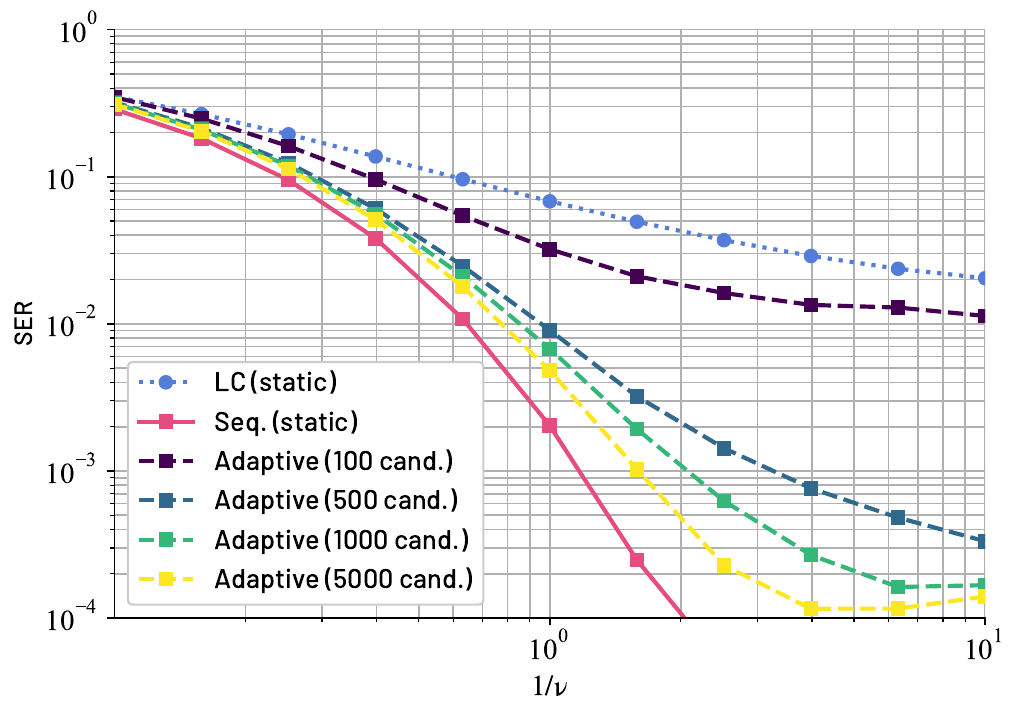}
         \vspace*{-8mm}
         \caption{}
         \label{fig:evaluation:mixtures:adaptive:main}
     \end{subfigure}
     \vspace*{-4mm}
    \caption{\textbf{Adaptive Transmission Scheme.} \textit{(a)} The sensor outputs of the adaptive transmission scheme (right) are more concentrated in the center of the \ac{LC}'s decisions region compared to the static alphabet (left). \textit{(b)} The adaptive scheme (dashed) outperforms the \ac{LC} detector (dotted) and approaches the performance of the sequence \mbox{detector (solid).}}
    \label{fig:evaluation:mixtures:adaptive}
    \vspace*{-10mm}
\end{figure}
To illustrate the working principle of the adaptive scheme, we consider an array with $\nsensors=2$ sensors and an alphabet of size $\nsymbols=6$, generated by the proposed \ac{MDA}. In Figure~\ref{fig:evaluation:mixtures:adaptive:motivation}, we show the sensor outputs for the static alphabet on the left-hand side and the sensor outputs for the adaptive transmission scheme on the right-hand side. Additionally, we show confidence ellipses of the \ac{LC} by solid lines in each plot, where the opacity increases from the $1\sigma$ to the $3\sigma$ confidence ellipse. 
This figure reveals why the adaptive transmission scheme can achieve lower \acp{SER}: The sensor outputs are much more tightly packed in the center of the decision regions when employing the adaptive transmission scheme, enabling a much lower \ac{SER} compared to the static mixture alphabet.
This becomes apparent in Figure~\ref{fig:evaluation:mixtures:adaptive:main}, where we investigate the achieved \ac{SER}. Specifically, we compare the proposed adaptive scheme to a static mixture alphabet, the latter in conjunction with either the \ac{LC} detector (blue dotted) or the sequence detector (red solid).
The dashed lines show the \ac{SER} achieved by the adaptive transmission scheme for different $|\mathcal{C}|$. 
As expected, increasing $|\mathcal{C}|$ decreases the \ac{SER} and closes the gap between the \ac{LC} and the sequence detector for low and moderate \acp{SNR}. Only for very high \acp{SNR}, the adaptive transmission scheme has an error-floor due to the finite $|\mathcal{C}|$ and, depending on the scenario, because the released concentrations are confined to $\feasibleset$. 
In summary, Figure~\ref{fig:evaluation:mixtures:adaptive:main} shows that the proposed \textbf{adaptive transmission scheme enables low \acp{SER} in \ac{ISI} channels} even when deploying \ac{LC} symbol-by-symbol detectors at the \ac{RX}.

%% file: sections/conclusion.tex
\scaleSection\section{Conclusion}\scaleSectionBelow\label{sec:conclusion}
In this work, we studied the detection and alphabet design for molecule mixture communications for \ac{ISI} channels and non-linear, cross-reactive \ac{RX} arrays. Specifically, we proposed a comprehensive system model that accounts for release, propagation, and sensor noise and includes a non-linear, cross-reactive \ac{RX} array.
We then proposed a symbol-by-symbol detector that exploits statistical knowledge of the \ac{ISI} and a sequence detector for static molecule mixture alphabets. Furthermore, we proposed a complementary mixture alphabet design algorithm that explicitly accounts for the employed \ac{RX} and optimizes the separability of the sensor outputs of different symbols. We demonstrated via extensive simulations for \ac{ISI}-free scenarios that our proposed detectors outperform data-driven baseline schemes and that our mixture alphabet design algorithm improves upon schemes that ignore the \ac{RX} characteristics. We also verified that the sequence detector enables effective symbol detection even in scenarios with strong \ac{ISI}. 
If the \ac{RX} has extremely limited computational capacity, an adaptive transmission scheme can be employed in conjunction with a symbol-by-symbol detector to achieve similar performance as the sequence detector. 
Our results highlight that reliable communication is possible with non-linear, cross-reactive sensing units even in the presence of strong \ac{ISI}.
Potential future directions include the experimental validation of the proposed schemes, their extension to incorporate uncertainty regarding system parameters, e.g., due to factors like changes in temperature or humidity, and their extension to multi-user scenarios. 

%% file: sections/appendix.tex
\appendix
\scaleSection\section*{Generation of Artificial Sensor Data}\scaleSectionBelow\label{sec:appendix:artificial_sensor_data}
While the response of \ac{MOS} sensors to individual gases in clean background air is usually reported in the datasheets of those sensors, their response to mixtures is usually not reported in a way that can be used to parametrize \eqref{eq:system_model:reciever_characteristics:f}. Also, sensor parameters are rarely reported in the literature and vary over several orders of magnitude depending on source and system parameters (see, e.g.,~\cite{llobet:steadystate_transient_behavior_of_thickfilm_tin_oxide_sensors_gas_mixtures,gurin:gas_mixture_estimation_using_powerlaw_models_arrayed_chemiresistive_MOS,hirobayashi:verification_logarithmic_model}). 
Due to this lack of reliable data, we generate artificial sensor data that resembles the sensor parameters reported in~\cite{llobet:steadystate_transient_behavior_of_thickfilm_tin_oxide_sensors_gas_mixtures}\footnote{It should be noted that some numerical values reported in \cite[Table~1~\&~2]{llobet:steadystate_transient_behavior_of_thickfilm_tin_oxide_sensors_gas_mixtures} seem inconsistent. For example, parameter $A_i$ for sensor TGS822 is off by a factor of 10 compared to the results reported in Figure~6. Since it is not possible to check all the possible values, we decided to generate artificial data with similar characteristics instead of relying on the individual numbers.}. 
To this end, we perform rejection sampling, i.e., we generate the parameters for $\a$, $\mathbf{A}$, and $\b$ independently and then check whether the joint parameter configuration for the sensor function $\f{\cdot}$ is valid, as will be defined below. Invalid configurations are disregarded and replaced by a new parameter set until a valid configuration is identified. 
A \textbf{\textit{valid} sensor configuration} is monotonously increasing with increasing molecule concentrations throughout $\feasiblesety$. This is evaluated by checking that $[\nabla_{\y} \f{\y}]_i \geq 0$, $\forall i \in \{0, \dots, \nspecies\}$, holds for $10^5$ test points $\y \sim \mathcal{U}(\feasiblesety)$.  
We select the \textbf{sensor parameters} independently for each species according to $\a \sim 10^{n_a}$, $\mathbf{A} \sim 10^{n_A}$, and $\b \sim \mathrm{TruncNorm}(0.5, 0.2, [0.3, 0.7])$. Here, $n_a$ is normal distributed with mean $-6.5$ and standard deviation $0.5$, and $n_A$ is normal distributed with mean $-7.5$ and standard deviation $1$. $\mathrm{TruncNorm}(0.5, 0.2, [0.3, 0.7])$ denotes a normal distribution with mean $0.5$ and standard deviation $0.2$ while being limited to the interval $[0.3, 0.7]$, thereby capturing the range of power law coefficients reported in the literature (see, e.g., \cite{yamazoe:theory_power_laws_semiconducator_gas_sensors,hua:theoretical_investigation_power_law_response_mos,madrolle:linear_quadratic_model_quantification_mixture_two_diluted_gases_single_MOS}).